\def\ps@pprintTitle{%
	\let\@oddhead\@empty
	\let\@evenhead\@empty
	\def\@oddfoot{\reset@font\hfil\thepage\hfil}
	\let\@evenfoot\@oddfoot
}
\tiny\color{black},
\newtheorem{theorem}{Theorem}[section]
\newtheorem{corollary}{Corollary}[section]
\theoremstyle{plain}
\newtheorem{definition}{Definition}[section]
\newtheorem{example}{Example}[section]
\theoremstyle{remark}
\numberwithin{equation}{section}
\begin{document}
\begin{frontmatter}		
\title{\textbf{Extropy-Based Generalized Divergence and Similarity Ratios: Theory and Applications}}

\author{Saranya. P \orcidlink{0009-0002-2629-0248}\corref{cor1}}
		\ead{saranyapanat96@gmail.com}
         \author{S. M. Sunoj \orcidlink{0000-0002-6227-1506}} 
         \ead{smsunoj@cusat.ac.in}
		\cortext[cor1]{Corresponding author}
		\address{Department of Statistics\\Cochin University of Science and Technology\\Cochin 682 022, Kerala, INDIA.}
\begin{abstract}
    In this article, we propose two classes of relative information measures based on extropy, viz., the generalized extropy similarity ratio (GESR) and generalized extropy divergence ratio (GEDR), that measure the similarity and discrepancy between two probability distributions, respectively. Definitions of GESR and GEDR are proposed along with their fundamental axioms, properties, and some measures satisfying those axioms are also introduced. The relationship of GESR with the popular cosine similarity is also established in the study. Various properties of GESR and GEDR, including bounds under the proportional hazards model and the proportional reversed hazards model, are derived. Nonparametric estimators of GESR are defined, and their performance is evaluated using simulation studies. Applications of the GESR in lifetime data analysis and image analysis are also demonstrated in this study.     
\end{abstract}
\end{frontmatter}

\section{Introduction}
In information theory, some of the divergence measures, whether expressed through the probability density function (PDF) $f(\cdot)$, the cumulative distribution function (CDF) $F (\cdot)$, or the survival function (SF) $\bar{F}(\cdot) = 1 - F(\cdot)$ of two random variables $X$ and $Y$ can be represented as the difference between the corresponding inaccuracy measure and the average uncertainty measure. 
We can define a class of divergence measures in information theory as follows: 
Let $\phi_1(\cdot)$ and $\phi_2(\cdot)$ represent general probability measures corresponding to $X$ and $Y$ respectively such that $\phi_i (\cdot)$ can be a probability density function, distribution function, or survival function. Let us denote the average uncertainty contained in the probability measure $\phi_1 (\cdot)$ about the predictability of an outcome of the random variable $X$ as $\mathscr{U}(\phi_1(X))$ and let $\mathscr{U}(\phi_1(X),\phi_2(Y))$ denote the general inaccuracy measure of $Y$ with $X$, such that $\mathscr{U}(\phi_1,\phi_2)=\mathscr{U}(\phi_1(X)=\mathscr{U}(\phi_2(Y)$ for identical $\phi_i (\cdot)$. Another measure of interest is a class of divergence measures, denoted by $\mathscr{D}(\phi_1(X)| \phi_2(Y))$, given by 
\begin{equation} \label{eq:R5divergence}
\mathscr{D}(\phi_1(X)|\phi_2(Y))=\mathscr{U}(\phi_1 (X),\phi_2 (Y))-\mathscr{U}(\phi_1 (X)).
\end{equation}    
This general framework is evident in well-known divergences such as Kullback-Leibler (KL) divergence \citep{kullback1951information}, extropy divergence \citep{mohammadi2024dynamic}, cumulative extropy divergence \citep{saranya2025inaccuracy}, etc.  For example, when $\phi_1 (x) = f(x)$ and $\phi_2 (x) = g(x)$, the probability measures of $X$ and $Y$ respectively represent the PDF's of the non-negative and absolutely continuous random variables $X$ and $Y$, we get the Kullback-Leibler divergence, which can be represented as the difference between the Kerridge's inaccuracy \citep{kerridge1961inaccuracy}, $K(f, g) = -\int_{0}^{\infty}{f(x) \log g(x) dx}$ and the Shannon differential entropy \citep{shannon1948mathematical}, $H(F) = -\int_{0}^{\infty}{f(x) \log f(x) dx}$ as
 \begin{equation*}
   KL(f, g) = \int_{0}^{\infty}{f(x) \log \frac{f(x)}{g(x)}dx} = K(f,g) - H(f). 
 \end{equation*} 

As a complementary dual of entropy, extropy has emerged as a novel measure of uncertainty quantification. According to \cite{lad2015extropy}, the extropy of the non-negative random variable $X$ with PDF $f(\cdot)$ is defined as,
\begin{equation}\label{1.1}
    J(X)=-\frac{1}{2}\int_0^{\infty} f^2(x) dx.
\end{equation}
Motivated with the cumulative residual entropy (CRE) due to \cite{rao2004cumulative}, recently \cite{jahanshahi2020cumulative} introduced an alternative measure of uncertainty of a non-negative random variable $X$, called the survival extropy, defined as $J_s(X)=-\frac{1}{2}\int_{0}^{\infty}{\bar{F}}^2(x)dx$ obtained by replacing $f(x)$ in \eqref{1.1} by the survival function $\bar{F}(x) = P(X > x)$.  Based on the same idea, by using the distribution function $F(x) = P (X \leq x)$, \cite{nair2020dynamic} derived cumulative extropy as $\bar{\xi}J_s(X)=-\frac{1}{2}\int_{0}^{\infty}{F}^2(x)dx$.  \cite{toomaj2023extropy} derived some properties and several theoretical merits of extropy, and dynamic versions of extropy. 

Also, \cite{hashempour2024new} provided a new measure of inaccuracy based
on extropy for record statistics between distributions of the $n$th upper (lower) record value and the parent random variable, which is given in the form,
\begin{equation}\label{eq:R5extropyinaccuracy}
    \xi J(X,Y)=-\frac{1}{2}\int_0^\infty f(x)g(x)dx.
\end{equation} 
\cite{hashempour2024dynamic} defined cumulative past extropy inaccuracy (cumulative extropy inaccuracy) as $\bar{\xi}J(X,Y)=-\frac{1}{2}\int_0^\infty {F}(x)G(x)dx$.  Further, \cite{saranya2025inaccuracy} introduced the survival extropy inaccuracy of two non-negative random variables $X$ and $Y$ with SFs $\bar{F}(\cdot)$ and $\bar{G}(\cdot)$ respectively, as
 \begin{equation}
\xi J_s(X,Y)=-\frac{1}{2}\int_0^\infty \bar{F}(x)\bar{G}(x)dx.
 \end{equation}
Note that the cumulative extropy inaccuracy $\bar{\xi}J(X,Y)$ and survival extropy inaccuracy $\xi J_s(X,Y)$ are members of the class of inaccuracy measures $\mathscr{U}(\phi_1(X),\phi_2(Y))$ based on extropy, when $\phi_1(x) = F(x), \phi_2 (x) = G(x)$ and $\phi_1(x) = \bar{F}(x), \phi_2 (x) = \bar{G}(x)$ respectively.

A useful measure of similarity or dissimilarity between two random variables is the notion of relative extropy, proposed by \cite{lad2015extropy},
\begin{equation}\label{eq:R5relative extropy}
   d(f,g)=\frac{1}{2}\int_0^\infty \left({f(x)}-{g(x)}\right)^2 dx. 
\end{equation} 
Another member of the class of divergence measures \eqref{eq:R5divergence} based on extropy, to measure the discrimination between two non-negative random variables $X$ and $Y$ with PDFs $f(\cdot)$ and $g(\cdot)$, is given by
\begin{equation}\label{eq:R5discrimeasuredensity}
    J(f|g)=\frac{1}{2}\int_0^\infty [f(x)-g(x)]f(x)dx = \frac{1}{2}E_f[f(X)-g(X)] =\xi J(X,Y)-J(X).
\end{equation}
\cite{saranya2025inaccuracy} extended \eqref{eq:R5discrimeasuredensity} based on the SF, called the survival extropy divergence, defined by
    \begin{equation}\label{survival extropy}
SJ(\bar{F}|\bar{G})=\frac{1}{2}\int_0^\infty (\bar{F}(x)-\bar{G}(x))\bar{F}(x)dx.
         \end{equation}   
Clearly, \eqref{survival extropy} is also a member of the class of divergence measures \eqref{eq:R5divergence},
\begin{equation}
SJ(\bar{F}|\bar{G})=\xi J_s(X,Y)-J_s(X).    
\end{equation}

Note that the divergence class $\mathscr{D}(\phi_1(X)|\phi_2(Y))$ serves as a measure of how different $\phi_1(X)$ is from $\phi_2(Y)$, a way to quantify the "distance" between two probability measures. The divergence class between $Y$ and $X$ can also be defined in a similar manner, 
\begin{equation}\label{eq:R5reversedivergence}
    \mathscr{D}(\phi_2(X)|\phi_1(Y)) = \mathscr{U}(\phi_1(X),\phi_2(Y))- \mathscr{U}(\phi_2 (Y)).
\end{equation} 
Since class of divergence measures in \eqref{eq:R5divergence} and \eqref{eq:R5reversedivergence} are inherently asymmetric, symmetric divergence measures are constructed as the sum of divergences in both orders, leading to a class of symmetric divergences, say $\mathscr{SD}(\phi_1(X),\phi_2(Y))$ given by, 
\begin{equation}
   \mathscr{SD}(\phi_1(X),\phi_2(Y))= \mathscr{D}(\phi_1(X)|\phi_2(Y)) +  \mathscr{D}(\phi_2(Y)|\phi_1(X)).
\end{equation}
The class of symmetric divergence measures $\mathscr{SD}(\phi_1(X),\phi_2(Y))$ can also be expressed in terms of the class of inaccuracy measures $\mathscr{U}(\phi_1(X),\phi_2(Y))$ as
\begin{equation}\label{eq:R5symmetricdiv}
    \mathscr{SD}(\phi_1(X),\phi_2(Y))= 2\mathscr{U}(\phi_1(X),\phi_2(Y))-\mathscr{U}(\phi_1(X))-\mathscr{U}(\phi_2(Y)).
\end{equation}
The symmetric measures, such as relative extropy \eqref{eq:R5relative extropy} and relative cumulative extropy  \citep{saranya2024relative} are members of the above class \eqref{eq:R5symmetricdiv}.  Motivated by these symmetric measures, here we introduce two new classes of discrimination measures.
\newline
\newline
The article is organized as follows: Section 2 introduces the generalized divergence and similarity ratios based on extropies and their axioms. Examples for those measures satisfying the axioms are also discussed. In section 3, we derive the relationship between similarity measures with the cosines. Properties and relationships of defined measures with existing survival models, including the proportional hazard model, proportional reversed hazard model, etc, are discussed in Section 4. Estimation and simulation studies are included in Section 5. The application of similarity ratios in lifetime and image data is done in Section 6. 

\section{Extropy-based Generalized Divergence and Similarity Ratios}

Unlike the relative and divergence measures that quantify the difference between two probability models, this section proposes an extropy-based new generalized divergence and similarity ratios that quantify the discrepancy and closeness between two probability models. The following is the definition of the extropy-based generalized divergence ratio between two distributions.
\begin{definition}
Let $\phi_1(x)$ and $\phi_2(x)$ denote the general probability measures of $X$ and $Y$ respectively, such that $\phi_i, \; i = 1, 2$ can be a PDF, CDF, or SF. Further, let $U(\phi_1(X))$ be the generalized extropy of $X$, given by
\[
U(\phi_1(X))=-\frac{1}{2}\int_0^\infty \phi_1^2(x)dx,
\]
and
\[
U(\phi_1(X),\phi_2(Y))=-\frac{1}{2}\int_0^\infty \phi_1(x)\phi_2(x)dx
\]
denote the Generalized Extropy Inaccuracy (GEI) measure of $X$ and $Y$. Then the Generalized Extropy Divergence Ratio (GEDR) of $X$ with $Y$, $ I(\phi_1(X)|\phi_2(Y)) $ is given by 

\begin{equation}\label{eq:R5divergenceratio}
    I(\phi_1(X)|\phi_2(Y)) = \frac{U(\phi_1(X),\phi_2(Y))}{U(\phi_1(X))}.
\end{equation}   
Similarly, the generalized extropy divergence ratio of $Y$ with $X$ is given by  
\begin{equation}\label{2.2}
   I(\phi_2(Y)|\phi_1(X)) = \frac{U(\phi_1(X),\phi_2(Y))}{U(\phi_2(Y))}.
\end{equation} 
\end{definition}
Clearly, the GEI of $X$ and $Y$,  $U(\phi_1(X), \phi_2(Y))$ is symmetric and $U(\phi_1(X), \phi_1(X))=U(\phi_1(X))$, the generalized extropy of $X$.  Further, the GEDR is a dimensionless measure of divergence between the distributions of $X$ and $Y$ based on the probability measures $\phi_1$ and $\phi_2$, respectively.  The following axioms/properties are necessary and sufficient for a measure to be termed as GEDR.
\begin{enumerate}
 \item \textbf{Asymmetry}: GEDR is asymmetric with respect to $X$ and $Y$, since $ I(\phi_2(Y)|\phi_1(X))\not= I(\phi_1(X)|\phi_2(Y))$. 
    \item \textbf{Positivity}: GEDR is always positive since the ratio of two negative or two positive measures is always positive.
    \item \textbf{Unity under similarity}: $I(\phi_1(X)|\phi_2(Y)) = 1 = I(\phi_2(Y)|\phi_1(X))$ if the distributions of $X$ and $Y$ are identical.  
    \item \textbf{Boundedness}: $ I(\phi_2(Y)|\phi_1(X)) \in (0,\infty)$ if $U(\phi_1(X),\phi_2(Y))$ and $U(\phi_2(Y))$ are finite. Similarly the GEDR $I(\phi_1(X)|\phi_2(Y))\in (0,\infty)$ for finite $U(\phi_1(X),\phi_2(Y))$ and $U(\phi_1(X))$.
    \item Relationship between the order of the GEDR.
    \begin{equation}
         I(\phi_1(X)|\phi_2(Y)) > (<)1 \iff  I(\phi_2(Y)|\phi_2(X))< (>) 1 .
    \end{equation}
    \begin{proof}
        Applying axioms \ref{Axiom:R5similaritybound} and \ref{Axiom:R5similarity} of generalized extropy similarity ratio, we obtain 
        \[
        0<I(\phi_1(X)|\phi_2(Y))\times I(\phi_2(Y)|\phi_2(X))\leq 1.
        \]
        Then $I(\phi_1(X)|\phi_2(Y))>1$ implies $I(\phi_2(Y)|\phi_1(X))<1$.  The converse is similar.
    \end{proof}
    \item Upperbound of GEDR:  $I(\phi_1(X)|\phi_2(Y))\leq \sqrt{{U(\phi_2(Y))}/{U(\phi_1(X))}}$ and $I(\phi_2(Y)|\phi_1(X))\leq \sqrt{{U(\phi_1(X))}/{U(\phi_2(Y))}}$.
   \begin{proof}
      Using Cauchy-Schwarz inequality, we have 
      \[
       \int_0^\infty \phi_1(x)\phi_2(x)dx\leq \sqrt{\int_0^\infty \phi_1^2(x)dx}\sqrt{\int_0^\infty \phi_2^2(x)dx}.
      \]
      Assuming $\int_0^\infty \phi_1^2(x)dx>0$,
      \[
      \frac{\int_0^\infty \phi_1(x)\phi_2(x)dx}{\int_0^\infty \phi_1^2(x)dx}\leq \frac{\sqrt{\int_0^\infty \phi_1^2(x)dx}\sqrt{\int_0^\infty \phi_2^2(x)dx}}{\int_0^\infty \phi_1^2(x)dx},
      \]
      which completes the proof of the first bound.  The second case is similar.
    \end{proof}
   \item $U(\phi_1(X))\geq (\leq) U(\phi_2(Y)) \iff I(\phi_1(X)|\phi_2(Y))\geq (\leq)I(\phi_2(Y)|\phi_1(X)).$
   \begin{proof}
      The property can be directly obtained from the relationship:
      \[
      \frac{I(\phi_1(X)|\phi_2(Y))}{I(\phi_2(Y)|\phi_1(X))}=\frac{U(\phi_2(Y))}{U(\phi_1(X))}.
      \]
   Note that $U(\phi_1(X))$ and $U(\phi_2(Y))$ are always negative. 
   \end{proof}
   \end{enumerate}

As an alternative to symmetric divergence measures, we next define an extropy-based generalized similarity ratio, as follows.
\begin{definition}
Let $\phi_1(x)$ and $\phi_2(x)$ be the general probability measures,  $U(\phi_1(X))$ and $U(\phi_2(Y))$ be the generalized extropy of $X$ and $Y$ respectively. Further, let $U(\phi_1(X),\phi_2(Y))$ denote the GEI of $X$ and $Y$. Then the Generalized Extropy Similarity Ratio (GESR) of $X$ and $Y$ is given by 
\begin{equation} \label{eq:R5similarity_ratio}
\begin{split}
 S(\phi_1(X),\phi_2(Y))&=\frac{U(\phi_1 (X),\phi_2 (Y))^2}{U(\phi_1(X)) U(\phi_2(Y))}.
\end{split}
\end{equation}  
\end{definition}
$S(\phi_1(X),\phi_2(Y))$ provides a normalized measure of similarity between two distributions. This framework unifies different extropy-based divergence measures and offers a systematic way to analyze probabilistic discrepancies in various applications.

The following axioms/properties are necessary and sufficient for a measure to be termed the generalized extropy similarity ratio (GESR).
\begin{enumerate}
    \item \textbf{Symmetricity}: GESR is symmetric about $X$ and $Y$.
    \[
    S(\phi_1(X),\phi_2(Y))=S(\phi_1(Y),\phi_2(X)).
    \]
    \begin{proof}
         GEI such as extropy-based inaccuracy, survival extropy inaccuracy, and cumulative extropy inaccuracy are symmetric about $X$ and $Y$. That is $U(\phi_1(X),\phi_2(Y))=U(\phi_2(Y),\phi_1(X))$. So $S(\phi_1(X),\phi_2(Y))$ is symmetric about $X$ and $Y$.
    \end{proof}
     \item \textbf{Bounded in (0,1]} \label{Axiom:R5similaritybound}: $ S(\phi_1(X),\phi_2(Y))$ lies in the interval $(0,1]$.
     \begin{proof}
      
 The GESR always lies in the interval $(0,1]$. They are always positive as they are the ratio of two negatively valued functions.
Using Cauchy-Schwarz inequality, we have
\begin{equation}\label{R5:eqcauchyschwarz}
    \begin{split}
      \int_0^\infty \phi_1(x)\phi_2(x)dx \leq \int_0^\infty \phi^2_1(x)dx\int_0^\infty \phi^2_2(x)dx.
    \end{split}
\end{equation}
From \eqref{R5:eqcauchyschwarz},
\[
S(\phi_1(X),\phi_2(Y))=\frac{\int_0^\infty \phi_1(x)\phi_2(x)dx}{\int_0^\infty \phi^2_1(x)dx\int_0^\infty \phi^2_2(x)dx}\leq 1.
\]       
     \end{proof}
    \item  \textbf{Identity property}: $S(\phi_1(X),\phi_1(X))=1$.
    \begin{proof}
      Since $U(\phi_1(X),\phi_1(X))=U(\phi_1(X))$,  $S(\phi_1(X),\phi_1(X))=\frac{U(\phi_1(X))^2}{U(\phi_1(X))^2}=1$.
    \end{proof}
    \item \textbf{Unity under maximum similarity}: $ S(\phi_2(Y),\phi_1(X))=1$ (maximum similarity) if the distributions of $X$ and $Y$ are identical.
    \item \textbf{Similarity ratio is the product of corresponding divergence ratios.}\label{Axiom:R5similarity}
    \[
    S(\phi_1(X),\phi_2(Y))= I(\phi_1(X)|\phi_2(Y))* I(\phi_2(X)|\phi_1(Y))
    \]
\end{enumerate}
It is interesting to note that the similarity between the distributions is equal to the product of divergence or discrepancy between the distributions. While the symmetric divergence measure can be defined as the sum of the asymmetric divergence of $X$ and $Y$, and $Y$ and $X$, the GESR is obtained as the product of asymmetric GEDRs.  

The statistical terms covariance and correlation are two common measures to measure the linear relationship between two random variables. The covariance provides the direction of a linear association and measures the combined variability of the variables. Since it is scale-invariant, its magnitude depends on the units of the variables involved, which makes it difficult to interpret or compare across different contexts. Alternatively, we have the normalized version of covariance, called the correlation coefficient, which provides a dimensionless, bounded, and interpretable measure of the strength and direction of linear dependence. This normalization makes correlation a tool of choice when comparisons over different variables or datasets are needed.

A similar analogy applies in the context of existing divergence measures and defined similarity ratios. Like covariance, the divergence measures capture discrepancy or disagreement between two distributions but lack normalization, making it difficult to interpret directly across different distributions or functions. To address this limitation, the GESR is introduced. It serves a similar role to the correlation coefficient by providing normalized, dimensionless, and bounded measures of similarity between two informational structures. On the other hand, the GEDR is a positive, normalized form of divergence measure.  The GEDR measures the relative inaccuracy of $X$ under $Y$'s distribution compared to its own, while the GESR quantifies how closely two informational representations align. These measures elevate interpretability, comparability, and utility in both theoretical and applied settings, analogous to how the correlation coefficient improves upon covariance.

Specific forms of GEDR and GESR defined in \eqref{eq:R5divergenceratio} and \eqref{eq:R5similarity_ratio} based on different probability measures that satisfy the axioms are discussed in the sequel.
\begin{itemize}
    \item[(i)] \textbf{Extropy divergence and similarity ratio}
\begin{definition}
    For two absolutely continuous non-negative random variables $X$ and $Y$, the extropy divergence ratio is given by 
     \begin{equation}
        I_{E}(X|Y)=\frac{\xi J(X,Y)}{J(X)},
    \end{equation}
    and the extropy similarity ratio is given by
    \begin{equation}
       S_E(X,Y)=\frac{(\xi J(X,Y))^2}{J(X)J(Y)}=I_{E}(X|Y)\times I_{E}(Y|X).
    \end{equation}
\end{definition}
$I_{E}(X|Y)$ and $S_E(X,Y)$ are obtained by taking $\phi_1 (x) = f(x)$ and $\phi_2 (x) = g(x)$ in the expressions of GEDR in \eqref{eq:R5divergenceratio} and GESR in \eqref{eq:R5similarity_ratio} respectively.  $I_{E}(X|Y)$ can be obtained in a similar manner from \eqref{2.2} as $I_{E}(Y|X) = \frac{\xi J(X,Y)}{J(Y)}$.  
\begin{theorem}
    $S_E(X,Y)=1$ if and only if $f(x)=g(x)$ for all $x$.
\end{theorem}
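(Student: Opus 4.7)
The plan is to recognize that $S_E(X,Y)$ is precisely the squared cosine-type ratio $\bigl(\int f g\bigr)^2 / \bigl(\int f^2 \int g^2\bigr)$, so the theorem is essentially the equality condition in the Cauchy--Schwarz inequality, combined with the fact that $f$ and $g$ are probability densities.

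For the easy \textbf{if} direction, I would simply invoke the identity property of GESR already established in the axioms: when $f(x)=g(x)$ everywhere, $\xi J(X,Y)=J(X)=J(Y)$, so $S_E(X,Y)=J(X)^2/J(X)^2=1$.

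For the \textbf{only if} direction, suppose $S_E(X,Y)=1$. Writing out the definitions, the minus signs and the factors of $\tfrac12$ cancel and yield
\begin{equation*}
\Bigl(\int_0^\infty f(x)g(x)\,dx\Bigr)^{\!2} = \int_0^\infty f^2(x)\,dx\,\int_0^\infty g^2(x)\,dx.
\end{equation*}
This is the equality case of the Cauchy--Schwarz inequality in $L^2([0,\infty))$ applied to the functions $f$ and $g$. Standard Hilbert-space theory then gives that $f$ and $g$ are linearly dependent, i.e.\ there exists a constant $c\geq 0$ with $g(x)=c\,f(x)$ almost everywhere on $[0,\infty)$.

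The final step, which is also the only slightly substantive one, is to pin down $c=1$ using the probabilistic constraints. Integrating $g=cf$ over $[0,\infty)$ and using $\int_0^\infty f(x)\,dx=\int_0^\infty g(x)\,dx=1$ forces $c=1$, hence $f(x)=g(x)$ almost everywhere. I would note that the ``for all $x$'' in the statement should be read in the a.e.\ sense, which is the standard convention when dealing with densities. I do not expect any real obstacle here; the whole argument hinges on Cauchy--Schwarz, and the equality analysis is classical.
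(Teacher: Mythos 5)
Your proposal is correct and follows essentially the same route as the paper: apply the Cauchy--Schwarz inequality to $f$ and $g$, use its equality case to get proportionality $f = cg$, and then integrate both densities to force $c=1$. The only difference is that you state the proportionality/identity in the almost-everywhere sense and handle the trivial ``if'' direction via the identity axiom, which are harmless refinements of the paper's argument.
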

    \begin{proof}
        \begin{equation}
            S_E(X,Y)=\frac{\left(\int_0^\infty f(x)g(x)dx\right)^2}{\int_0^\infty f^2(x)dx \int_0^\infty g^2(x)dx}
        \end{equation}
        Applying Cauchy-Schwarz inequality, we have
        \begin{equation}
        \begin{split}
          \left(\int_0^\infty f(x)g(x)dx\right)^2\leq \int_0^\infty f^2(x)dx \int_0^\infty g^2(x)dx, 
        \end{split}
        \end{equation}
        implies $S_E(X,Y)\leq 1$. Equality is achieved if and only if $f(x)$ and $g(x)$ are proportional, $i.e.$, $ f(x)=cg(x) $ for some constant $c>0$. 
        \[
        \int_0^\infty f(x)dx=c\int_0^\infty g(x)dx \implies c=1\implies f(x)=g(x) \forall x.
        \]
    \end{proof} 

   For two continous random variables $X$ and $Y$, we have  
  \[
  I_{E}(X|Y)>(<)1 \iff I_{E}(Y|X)<(>)1.
  \]
$I_{E}(X|Y)$ is a non-negative measure of divergence between the distributions of $X$ and $Y$ and is equal to 1 if $f(x)=g(x)$ for every $x$. The ratio of GEDR of both orders is equal to the ratio of corresponding extropies as follows:
 \[
 \frac{I_{E}(Y|X)}{{I_{E}(X|Y)}}=\frac{J(X)}{J(Y)}.
 \]
\item[(ii)] \textbf{Survival extropy divergence and similarity ratio} 
\begin{definition}
    For two absolutely continuous non-negative random variables $X$ and $Y$, the survival extropy divergence ratio of $X$ with $Y$ and $Y$ with $X$ are given by 
\[
I_{SE}(X|Y)=\frac{\xi J_s(X,Y)}{J_s(X)} \; \; \text{and} \; \; I_{SE}(Y|X) = \frac{\xi J_s(X,Y)}{J_s(Y)},
\]
and the survival extropy similarity ratio of $X$ and $Y$ is given by 
\[
S_{SE}(X,Y)=\frac{(\xi J_s(X,Y))^2}{J_s(X)J_s(Y)}=I_{SE}(X | Y) \times I_{SE}(Y | X)..
\] 
\end{definition}
The survival extropy similarity ratio is a positive measure of similarity between two distributions and is always symmetric about $X$ and $Y$. 
  
\item[(iii)] \textbf{{Cumulative extropy divergence and similarity ratio}}
\begin{definition}
  For two continuous random variables $X$ and $Y$, 
then the cumulative extropy divergence ratios are given by 
\[
I_{CE}(X|Y)=\frac{\bar{\xi}J(X,Y)}{\bar{\xi}J(X)} \; \; \text{and} \; \; I_{CE}(Y|X) = \frac{\bar{\xi}J(X,Y)}{\bar{\xi}J(Y)},
\]
and the cumulative extropy similarity ratio is given by 
\[
S_{CE}(X,Y)=\frac{(\bar{\xi}J(X,Y))^2}{\bar{\xi}J(X)\bar{\xi}J(Y)}=I_{CE}(X | Y) \times I_{CE}(Y | X)..
\]  
\end{definition}
The cumulative extropy similarity ratio is a positive measure of similarity between two distributions and is always symmetric about $X$ and $Y$. 
\end{itemize}

\section{Similarity ratio as square of cosine of angle between functions}

 Cosine similarity is one of the most popular similarity measures applied to text documents, such as in numerous information retrieval applications  \citep{baeza1999modern}) and clustering \citep{larsen1999fast}. Cosine similarity calculation between two vectors $P$ and $Q$ is performed by the following formula (see \cite{gomaa2013survey}).
\[
\cos \alpha = \frac{P\times Q}{|P|\times|Q|}=\frac{\sum_{i=1}^{n} p_i q_i}{\left( \sum_{i=1}^{n} p_i^2 \right)^{1/2} \left( \sum_{i=1}^{n} q_i^2 \right)^{1/2}}
\]

Now we define the GESR as the square of the cosine.
\begin{theorem}
 GESR is the square of the cosine of the angle between the probability measures $\phi_i$ of $X$ and $Y$.
\begin{equation}\label{eq:R5cossquaretheta}
 S(\phi_1(X),\phi_2(Y))=cos^2\theta,   
 \end{equation}
where $\theta$ is the angle between $\phi_1(X)$ and $\phi_2(Y)$.
\end{theorem}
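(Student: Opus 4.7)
The plan is to recognize that whenever the probability measures $\phi_1$ and $\phi_2$ are square-integrable on $[0,\infty)$, they can be viewed as vectors in the Hilbert space $L^2([0,\infty))$ equipped with the standard inner product $\langle \phi_1,\phi_2\rangle = \int_0^\infty \phi_1(x)\phi_2(x)\,dx$ and induced norm $\|\phi_i\| = \bigl(\int_0^\infty \phi_i^2(x)\,dx\bigr)^{1/2}$. The angle $\theta$ between these two vectors is then defined, exactly as in the finite-dimensional cosine-similarity formula recalled just before the theorem, by
\begin{equation*}
\cos\theta \;=\; \frac{\langle \phi_1,\phi_2\rangle}{\|\phi_1\|\,\|\phi_2\|} \;=\; \frac{\int_0^\infty \phi_1(x)\phi_2(x)\,dx}{\left(\int_0^\infty \phi_1^2(x)\,dx\right)^{1/2}\left(\int_0^\infty \phi_2^2(x)\,dx\right)^{1/2}},
\end{equation*}
which is well defined (with $|\cos\theta|\le 1$) by the Cauchy--Schwarz inequality already invoked in the boundedness axiom.

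Next I would simply substitute the explicit integral expressions for $U(\phi_1(X))$, $U(\phi_2(Y))$, and $U(\phi_1(X),\phi_2(Y))$ from the definition into the GESR formula \eqref{eq:R5similarity_ratio}. Squaring $\cos\theta$ produces
\begin{equation*}
\cos^2\theta \;=\; \frac{\left(\int_0^\infty \phi_1(x)\phi_2(x)\,dx\right)^2}{\int_0^\infty \phi_1^2(x)\,dx\,\int_0^\infty \phi_2^2(x)\,dx},
\end{equation*}
while the GESR equals
\begin{equation*}
S(\phi_1(X),\phi_2(Y)) \;=\; \frac{\bigl(-\tfrac{1}{2}\int_0^\infty \phi_1(x)\phi_2(x)\,dx\bigr)^2}{\bigl(-\tfrac{1}{2}\int_0^\infty \phi_1^2(x)\,dx\bigr)\bigl(-\tfrac{1}{2}\int_0^\infty \phi_2^2(x)\,dx\bigr)}.
\end{equation*}
The common factors of $\tfrac{1}{4}$ in numerator and denominator cancel, yielding identical expressions and hence \eqref{eq:R5cossquaretheta}.

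There is no real analytic obstacle here; the whole statement is essentially a reinterpretation. The only point worth flagging, and the place where a reader might want a line of justification, is that the two negative signs in the denominator multiply to a positive number so that $S$ is genuinely non-negative and matches $\cos^2\theta$ (rather than, say, $-\cos^2\theta$), and that one needs $\phi_1,\phi_2\in L^2([0,\infty))$ with nonzero norm for the angle $\theta$ to be defined, which is exactly the finiteness and non-degeneracy assumption already built into the boundedness axiom of GESR. I would close by remarking that this identification gives a geometric interpretation: the GESR attains its maximum value $1$ precisely when $\phi_1$ and $\phi_2$ are collinear in $L^2$, recovering the identity property and the equality case of Cauchy--Schwarz used in the earlier axioms.
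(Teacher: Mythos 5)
Your proposal is correct and follows essentially the same route as the paper: interpreting $\phi_1,\phi_2$ as vectors in $L^2([0,\infty))$, forming $\cos\theta$ from the inner product and norms, and substituting the integral definitions into the GESR so the constant factors cancel. Your explicit remarks on the cancellation of the $\tfrac{1}{4}$ factors, the sign of the denominator, and the need for finite nonzero norms are minor additions in care, not a different argument.
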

\begin{proof}
In functional spaces, particularly the Hilbert space \( L^2([0, \infty)) \), the space of square-integrable functions, functions are treated analogously to vectors in Euclidean space. This means that notions like inner product, norm, and angle can be defined for functions.

Given two functions \( f(x) \) and \( g(x) \), we define:
\begin{itemize}
    \item[(i)] {Inner product:} $\langle f, g \rangle = \int_0^\infty f(x)g(x) \, dx$
     \item[(ii)] {Norms:} $\|f\| = \left( \int_0^\infty f^2(x) \, dx \right)^{1/2}, \quad \|g\| = \left( \int_0^\infty g^2(x) \, dx \right)^{1/2}$
     \item[(iii)] {Cosine of the angle} \( \theta \) between \( f \) and \( g \): $\cos\theta = \frac{\langle f, g \rangle}{\|f\| \|g\|}$
\end{itemize}

Thus, the cosine of the angle measures how "aligned" the two functions are in the function space. Replacing $f$ and $g$ by probability measures $\phi_1$ and $\phi_2$ and squaring, we get \eqref{eq:R5cossquaretheta} as
\[
S(\phi_1(X),\phi_2(Y))=\frac{U(\phi_1(X),\phi_2(Y))^2}{U(\phi_1^2(X))U(\phi_2^2(Y))}=\frac{\left(\int_0^\infty \phi_1(x)\phi_2(x)dx\right)^2}{\int_0^\infty \phi_1^2(x)\int_0^\infty\phi_2^2(x)dx}=cos^2\theta
\]
\end{proof} 
The square-of-cosine structure is consistent across:
\begin{itemize}
    \item[(i)] \textbf{Extropy similarity} \( S_E(X,Y) \) is based on probability density functions.
    The extropy-based similarity measures are formulated as:
\[
S_E(X,Y) = \left( \frac{ \langle f, g \rangle }{ \|f\| \|g\| } \right)^2= \cos^2\theta,
\]
\[
\cos \theta=\sqrt{S_E(X,Y)}
\]
    \item[(ii)] \textbf{Survival extropy similarity} \( S_{SE}(X,Y) \) is based on survival functions.
    The survival extropy similarity measures are formulated as:
\[
S_{SE}(X,Y) = \left( \frac{ \langle \bar{F}, \bar{G} \rangle }{ \|\bar{F}\| \|\bar{G}\| } \right)^2= \cos^2\theta,
\]
\[
\cos \theta=\sqrt{S_{SE}(X,Y)}
\]   
    \item [(iii)] \textbf{Cumulative extropy similarity} \( S_{CE}(X,Y) \) is based on cumulative distribution functions.
    The extropy-based similarity measures are formulated as:
\[
S_{CE}(X,Y) = \left( \frac{ \langle F, G \rangle }{ \|F\| \|G\| } \right)^2= \cos^2\theta,
\]
\[
\cos \theta=\sqrt{S_{CE}(X,Y)}
\]   
\end{itemize}  
Considering \( \phi_1 \) and \( \phi_2 \) as vectors in an infinite-dimensional space:
\begin{itemize}
    \item[(i)] \( \cos\theta = 1 \) ($i.e.$, \( \theta = 0^\circ \)) means the functions are perfectly aligned (identical up to scaling).
    \item[(ii)] \( \cos\theta = 0 \) ($i.e.$, \( \theta = 90^\circ \)) means the functions are orthogonal (completely dissimilar).
    \item[(iii)] Intermediate values of \( \cos\theta \) indicate partial similarity.
\end{itemize}
Consequently, \( S(X,Y) = (\cos\theta)^2 \) measures the degree of similarity between the shapes of two functions.
\begin{itemize}
    \item[(i)] High similarity ($i.e.$, \( S \) close to 1) means that two distributions have very similar behavior.
    \item [(ii)] Low similarity ($i.e.$, \( S \) close to 0) indicates that the two distributions are distinct in their structure.
\end{itemize}
The similarity measures are sensitive to the overall shape and spread of the distributions, not just pointwise differences. It enhances the difference more sharply: when two functions are slightly misaligned (small angle), their cosine drops gently, but squaring amplifies this drop.
Ultimately, this representation refines our understanding of how close or far two probability distributions are from each other, not only numerically but also geometrically.

\begin{example}[${S_E}$ and $S_{SE}$ of two exponential distributions]
Let $X \sim \text{Exp}(\lambda_1)$ and $Y \sim \text{Exp}(\lambda_2)$, with probability density functions $f(x) = \lambda_1 e^{-\lambda_1 x}, \quad g(x) = \lambda_2 e^{-\lambda_2 x}, \quad x \geq 0.$

The inner product of $f$ and $g$ is given by
\[
\langle f, g \rangle = \int_0^\infty f(x)g(x) \, dx = \lambda_1 \lambda_2 \int_0^\infty e^{-(\lambda_1+\lambda_2)x} \, dx.
\]
Evaluating the integral,
\[
\langle f, g \rangle = \lambda_1 \lambda_2 \times \frac{1}{\lambda_1+\lambda_2}.
\]
The norms of $f$ and $g$ are
\[
\|f\|^2 = \int_0^\infty f^2(x) \, dx = \lambda_1^2 \int_0^\infty e^{-2\lambda_1 x} \, dx = \lambda_1^2 \times \frac{1}{2\lambda_1} = \frac{\lambda_1}{2},
\]
\[
\|g\|^2=\frac{\lambda_2}{2}.
\]
Hence, the $S_E$  between $f$ and $g$ is
\[
\cos^2\theta = \left(\frac{\langle f, g \rangle}{\|f\| \|g\|}\right)^2 = \frac{\left(\dfrac{\lambda_1 \lambda_2}{\lambda_1+\lambda_2}\right)^2}{{\dfrac{\lambda_1}{2}} {\dfrac{\lambda_2}{2}}} = \frac{4 {\lambda_1 \lambda_2}}{(\lambda_1+\lambda_2)^2}.
\]
Similarly, the survival extropy similarity $S_{SE}$ can be derived and is equal to ${4 {\lambda_1 \lambda_2}}/{(\lambda_1+\lambda_2)^2}$.
Therefore, the similarity is explicitly on the rates $\lambda_1$ and $\lambda_2$.
 
When $\lambda_1 = \lambda_2$, we have $\cos\theta = 1$ and $\theta = 0^\circ$, meaning the two distributions are identical. When $\lambda_1$ and $\lambda_2$ are very different, $\cos\theta$ decreases, and $\theta$ increases, indicating greater dissimilarity between the two distributions.
\begin{figure}
    \centering
    \includegraphics[width=0.7\linewidth]{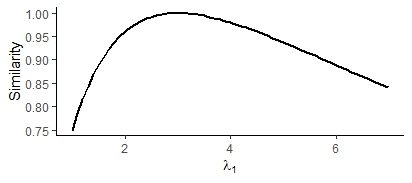}
    \caption{$S_E$ and $S_{SE}$ of two exponential distributions with parameters $\lambda_1$ and $\lambda_2$ over different $\lambda_1$ values ($\lambda_2=3$).}
    \label{fig:R5similarityofexp}
\end{figure}

Figure \ref{fig:R5similarityofexp} shows the $S_E$ and $S_{SE}$ of two exponential distributions with parameters $\lambda_1$ and $\lambda_2=3$ respectively over different values of $\lambda_1$. Similarity ratio is maximum when $\lambda_1=\lambda_2$ and decreases when $\lambda_2$ varies from 3.
\end{example}

\section{More properties}

The following theorem derives the scale invariance of the GEDR and GESR.
\begin{theorem}\label{Theorem:R5scaleinvariance} (\textbf{Scale invariance property}):
The GEDR and GESR are scale invariant. Let $X$ and $Y$ be two continuous random variables and let $a>0$ be a constant. Define $X_a = aX$ and $Y_a = aY$. Then, the GEDR and GESR satisfy :
\[
I(aX|aY)=I(X|Y), \qquad S(aX,aY)=S(X,Y).
\]
\end{theorem}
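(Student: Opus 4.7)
The plan is to perform a change of variables $u=x/a$ in every integral defining $U(\phi_1(X))$ and $U(\phi_1(X),\phi_2(Y))$, observe that both the numerator and the denominator of each ratio pick up the \emph{same} multiplicative factor in $a$, and conclude that the factor cancels. No probabilistic input is required beyond the standard transformation rule for the density, distribution, or survival function under a positive scale change.

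First I would record how the probability measure $\phi_i$ transforms under $X\mapsto aX$. Writing $\phi_{1,aX}$ for the measure of $X_a=aX$, in all three admissible cases one has $\phi_{1,aX}(x)=c_a\,\phi_1(x/a)$, where $c_a=1/a$ when $\phi_1$ is a PDF (by the Jacobian) and $c_a=1$ when $\phi_1$ is a CDF or SF (since $P(aX\le x)=P(X\le x/a)$, and analogously for the SF). This reduces the three cases to a single computation parametrised by $c_a$. Substituting $u=x/a$ in the defining integral then gives
\[
U(\phi_{1,aX}(X_a)) = -\frac{1}{2}\int_0^\infty c_a^2\,\phi_1^2(x/a)\,dx = a\,c_a^2\,U(\phi_1(X)),
\]
and applying the same substitution to the inaccuracy integral---legitimate because both $X$ and $Y$ are rescaled by the same $a$---yields $U(\phi_{1,aX}(X_a),\phi_{2,aY}(Y_a))=a\,c_a^2\,U(\phi_1(X),\phi_2(Y))$.

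Substituting these two identities into the definitions \eqref{eq:R5divergenceratio} and \eqref{eq:R5similarity_ratio}, the common factor $a\,c_a^2$ cancels in the GEDR and the factor $(a\,c_a^2)^2$ cancels in the GESR, giving $I(aX|aY)=I(X|Y)$ and $S(aX,aY)=S(X,Y)$. There is no real obstacle: the argument is a one-line change of variables. The only minor point is that one must handle the three admissible types of $\phi_i$ consistently---which the single constant $c_a$ above accomplishes---and implicitly assume, as elsewhere in the paper, that $\phi_1$ and $\phi_2$ are of the same type, so that their scaling factors match and therefore cancel in the ratios.
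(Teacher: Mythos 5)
Your proposal is correct and follows essentially the same route as the paper: a substitution $u=x/a$ showing that the generalized extropy and the inaccuracy term acquire the identical multiplicative factor, which then cancels in the GEDR and (squared) in the GESR. The only difference is presentational---you treat the PDF, CDF, and SF cases uniformly through the constant $c_a$, whereas the paper carries out the same computation separately for $I_E,S_E$, $I_{SE},S_{SE}$, and $I_{CE},S_{CE}$.
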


\begin{proof}
Let $f_a(x)$ denote the density of $X_a = aX$. Then $f_a(x) = \frac{1}{a}f\left(\frac{x}{a}\right), \quad g_a(x) = \frac{1}{a}g\left(\frac{x}{a}\right).$
The extropy of $X_a$ is: $ J(X_a) = -\frac{1}{2} \int_0^\infty f_a^2(x) \, dx 
= -\frac{1}{2} \int_0^\infty \left( \frac{1}{a} f\left( \frac{x}{a} \right) \right)^2 dx.$
Substitute $u = \frac{x}{a}$, so that $dx = a\, du$, we get:$$J(X_a) = -\frac{1}{2} \cdot \frac{1}{a^2} \cdot a \int_0^\infty f^2(u)\, du = \frac{1}{a} J(X).$$
Similarly, $J(Y_a) = \frac{1}{a} J(Y)$. Now, compute the inaccuracy term:
\[
\xi J(X_a, Y_a) = -\frac{1}{2} \int_0^\infty f_a(x) g_a(x)\, dx 
= -\frac{1}{2} \int_0^\infty \frac{1}{a^2} f\left(\frac{x}{a}\right) g\left(\frac{x}{a}\right) dx.
\]
Substituting $u = \frac{x}{a}$ again,
\[
\xi J(X_a, Y_a) = -\frac{1}{2a} \int_0^\infty f(u)g(u)\, du = \frac{1}{a} \xi J(X,Y).
\]
Now the divergence ratio becomes:
\[
I_E(X_a|Y_a) = \frac{\xi J(X_a, Y_a)}{J(X_a)} = \frac{\frac{1}{a} \xi J(X,Y)}{\frac{1}{a} J(X)} = I_E(X|Y).
\]
Also, $I_E(Y_a|X_a) = I_E(Y|X).$ Therefore, the similarity ratio:
\[
S_E(X_a, Y_a) = I_E(X_a|Y_a) \cdot I_E(Y_a|X_a) = I_E(X|Y) \cdot I_E(Y|X) = S_E(X,Y).
\]
Similarly, the survival extropy of $X_a$ is
\[
J_s(X_a) = -\frac{1}{2} \int_0^\infty \bar{F}_a^2(x) \, dx = -\frac{1}{2} \int_0^\infty \bar{F}^2\left( \frac{x}{a} \right) dx.
\]
Substituting $u = \frac{x}{a}$, so that $dx = a \, du$, we obtain $J_s(X_a) = -\frac{1}{2} a \int_0^\infty \bar{F}^2(u) \, du = a J_s(X).$
Similarly, the survival extropy inaccuracy between $X_a$ and $Y_a$ is
\[
\xi J_s(X_a,Y_a) = -\frac{1}{2} \int_0^\infty \bar{F}_a(x)\bar{G}_a(x) \, dx = -\frac{1}{2} a \int_0^\infty \bar{F}(u)\bar{G}(u) \, du = a \xi J_s(X,Y).
\]
Therefore,
\[
I_{SE}(X_a|Y_a) = \frac{\xi J_s(X_a,Y_a)}{J_s(X_a)} = \frac{a \xi J_s(X,Y)}{a J_s(X)} = \frac{\xi J_s(X,Y)}{J_s(X)} = I_{SE}(X|Y).
\]
We obtain $S_{SE}(X_a,Y_a) = I_{SE}(X_a|Y_a) I_{SE}(Y_a|X_a) = I_{SE}(X|Y) I_{SE}(Y|X) = S_{SE}(X,Y).$
Similarly, we get the cumulative extropy divegence as
\[
I_{CE}(X_a|Y_a) =  I_{CE}(X|Y),
\]
which implies $S_{CE}(X_a,Y_a) = I_{CE}(X_a|Y_a) I_{CE}(Y_a|X_a) = I_{CE}(X|Y) I_{CE}(Y|X) = S_{CE}(X,Y).$

Thus, the survival and cumulative extropy divergence and similarity ratios are scale invariant.
This completes the proof.
\end{proof}
The following theorem examines the location invariance property of the GESR.
\begin{theorem}\label{Theorem:R5locationinvariance}(\textbf{Location invariance under common shift}): The GEDR and GESR are location invariant. Consider two absolutely continuous nonnegative random variables $X$ and $Y$.
Let \( X' = X+a \) and \( Y' = Y+a \) for some constant \( a \in \mathbb{R} \). Then, the GEDR and GESR satisfy
\[
I(X'|Y')=I(X|Y), \qquad S(X',Y') = S(X,Y).
\]
\end{theorem}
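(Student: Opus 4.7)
The plan is to run a case-by-case substitution argument that mirrors, almost verbatim, the scale-invariance proof of Theorem~\ref{Theorem:R5scaleinvariance}. First I would record the standard shift identities: if $X'=X+a$ and $Y'=Y+a$, then $f_{X'}(x)=f(x-a)$, $F_{X'}(x)=F(x-a)$, and $\bar F_{X'}(x)=\bar F(x-a)$, with matching identities for $Y'$. Because the shift is common to both variables, the pairs $(f_{X'},g_{Y'})$, $(F_{X'},G_{Y'})$, and $(\bar F_{X'},\bar G_{Y'})$ are each rigid translates of their unshifted counterparts.

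Next I would apply the substitution $u=x-a$ to each of $U(\phi_1(X'))$, $U(\phi_2(Y'))$, and $U(\phi_1(X'),\phi_2(Y'))$ in the three concrete specializations (density, survival, cumulative). The Jacobian is one, so each integrand maps bijectively to its unshifted version and produces $U(\phi_1(X'))=U(\phi_1(X))$, $U(\phi_2(Y'))=U(\phi_2(Y))$, and $U(\phi_1(X'),\phi_2(Y'))=U(\phi_1(X),\phi_2(Y))$. These three parallel verifications differ only in which probability measure plays the role of $\phi_i$, and are in spirit the same computations already done in Theorem~\ref{Theorem:R5scaleinvariance}, but with a trivial Jacobian instead of a factor of $a$.

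Plugging these equalities into definitions~\eqref{eq:R5divergenceratio} and~\eqref{eq:R5similarity_ratio} yields $I(X'|Y')=I(X|Y)$ at once; the similarity statement $S(X',Y')=S(X,Y)$ then follows either by the same direct substitution into~\eqref{eq:R5similarity_ratio} or, equivalently, from the product identity of Axiom~\ref{Axiom:R5similarity} applied to the two asymmetric GEDRs. I would spell out the argument fully for the density case $(I_E,S_E)$ and then simply note that the survival case $(I_{SE},S_{SE})$ and the cumulative case $(I_{CE},S_{CE})$ are obtained by the same substitution with $f,g$ replaced by $\bar F,\bar G$ and by $F,G$ respectively.

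The step I expect to be the main obstacle is the survival-function case when $a>0$: because $X'$ is supported on $[a,\infty)$, one has $\bar F_{X'}(x)=1$ on $[0,a)$ and only equals $\bar F(x-a)$ for $x\ge a$, so the integrand does not vanish below the shifted support and a naive substitution on $[0,\infty)$ picks up an unwanted extra term of order $a$ in both numerator and denominator that does not cancel in the ratio. The cleanest fix is to interpret $J_s$ and $\xi J_s$ on the natural support of the shifted variables (equivalently, to change variables first and only then extend by $\bar F\equiv 1$ below the support), under which convention the substitution $u=x-a$ is exact and the survival case falls in line with the other two. Once that interpretation is made explicit, the three specializations combine to give the general statement, completing the proof.
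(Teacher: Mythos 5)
Your argument is essentially the paper's own proof: the substitution $u=x-a$ with unit Jacobian, applied case by case to the density, survival and cumulative specializations, with the density case written out (using $f(u)=g(u)=0$ for $u<0$) and the other two treated as analogous. Where you differ is in the survival case, and there you are in fact more careful than the paper. The obstacle you flag is real: with the stated definitions $J_s(X)=-\tfrac12\int_0^\infty \bar F^2(x)\,dx$ and $\xi J_s(X,Y)=-\tfrac12\int_0^\infty \bar F(x)\bar G(x)\,dx$, a common shift by $a>0$ gives $\bar F_{X'}\equiv 1$ on $[0,a)$, hence $J_s(X')=J_s(X)-a/2$, $J_s(Y')=J_s(Y)-a/2$ and $\xi J_s(X',Y')=\xi J_s(X,Y)-a/2$, and these extra terms do not cancel in the ratios: for $X\sim\mathrm{Exp}(1)$, $Y\sim\mathrm{Exp}(2)$ one gets $S_{SE}(X,Y)=8/9$ but $S_{SE}(X+\tfrac12,Y+\tfrac12)=25/27$, so under the literal definition $I_{SE}$ and $S_{SE}$ are not location invariant. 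The paper's proof glosses over exactly this point, proving the density case and asserting that ``the same reasoning applies'' to $S_{SE}$ and $S_{CE}$; the assertion is fine for the CDF-based case (where $F(u)=0$ below the support) but not for the SF-based one. Your repair---computing $J_s$ and $\xi J_s$ over the natural support of the shifted variables---does restore the identity, and it is what the paper's empirical estimators implicitly do (the sums run over the pooled sample range, which is why Table \ref{Table:R5invariancesimilarity} shows $\hat S_{SE}$ unchanged under $\pm 0.5$ shifts); just be explicit that this is a modification of the stated definition rather than a consequence of it, since otherwise the survival part of the theorem needs to be restricted or restated.
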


\begin{proof}
Shifting $X$ and $Y$ by \(a\), we get $f_{X'}(x) = f(x-a)$ and $g_{Y'}(x) = g(x-a).$
Therefore,
\[
\xi J(X',Y') = -\frac{1}{2}\int_0^\infty f(x-a)g(x-a)\,dx.
\]
Substituting \( u = x-a \) (so \( x = u+a \), \( dx=du \)), we obtain
\[
\xi J(X',Y') = -\frac{1}{2}\int_{-a}^\infty f(u)g(u)\,du.
\]
Using the assumption \( f(u) = g(u) = 0 \) for \( u < 0 \), this reduces to
\[
\xi J(X',Y') = -\frac{1}{2}\int_0^\infty f(u)g(u)\,du = \xi J(X,Y).
\]
Similarly, we find $J(X') = -\frac{1}{2}\int_0^\infty f^2(x-a)\,dx = -\frac{1}{2}\int_{-a}^\infty f^2(u)\,du = J(X),$
and $J(Y') = J(Y).$ Therefore,
\[
S_E(X',Y') = \frac{(\xi J(X',Y'))^2}{J(X')J(Y')} = \frac{(\xi J(X,Y))^2}{J(X)J(Y)} = S_E(X,Y).
\]

The same reasoning applies for the cumulative extropy similarity \( S_{CE}(X,Y) \) and survival extropy similarity \( S_{SE}(X,Y) \), because they also consists of integrals of products of shifted cumulative distribution functions or survival functions, respectively, and the shift affect them similarly. So, we obtain
\[
S_{SE}(X',Y')=S_{SE}(X,Y); \quad S_{CE}(X',Y')=S_{CE}(X,Y).
\]
\end{proof}

\begin{corollary}
    Let the probability measure of $X$ is $\phi_1(x)$. Then the similarity ratio between $X$ and $aX$ have the form
    \[
    S(X,aX)=a\left(I(\phi_1(X)|\phi_1(aX))\right)^2.
    \]
\end{corollary}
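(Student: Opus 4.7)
The plan is to reduce the claim to a single scaling identity by unfolding the definitions. By the definition of GESR and GEDR,
\[
S(\phi_1(X),\phi_1(aX))=\frac{U(\phi_1(X),\phi_1(aX))^{2}}{U(\phi_1(X))\,U(\phi_1(aX))},
\qquad
\bigl(I(\phi_1(X)|\phi_1(aX))\bigr)^{2}=\frac{U(\phi_1(X),\phi_1(aX))^{2}}{U(\phi_1(X))^{2}}.
\]
Dividing one by the other gives the clean reduction
\[
\frac{S(\phi_1(X),\phi_1(aX))}{\bigl(I(\phi_1(X)|\phi_1(aX))\bigr)^{2}}=\frac{U(\phi_1(X))}{U(\phi_1(aX))},
\]
so the whole corollary boils down to verifying that this last ratio equals $a$.

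Next, I would establish the scaling identity $U(\phi_1(X))=a\,U(\phi_1(aX))$ by a simple change of variables, exactly in the spirit of the density computation in Theorem \ref{Theorem:R5scaleinvariance}. Taking $\phi_1$ to be the PDF, the density of $aX$ is $f_{aX}(x)=\tfrac{1}{a}f(x/a)$; substituting into $U(\phi_1(aX))=-\tfrac{1}{2}\int_0^{\infty}f_{aX}^{2}(x)\,dx$ and letting $u=x/a$ converts the integral into $\tfrac{1}{a}\int_0^\infty f^{2}(u)\,du$, hence $U(\phi_1(aX))=\tfrac{1}{a}U(\phi_1(X))$. Plugging this into the displayed ratio above yields $S(X,aX)=a\,(I(\phi_1(X)|\phi_1(aX)))^{2}$, which is the claim.

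The main obstacle I anticipate is that the constant depends on which probability measure $\phi_1$ represents: the same change-of-variables argument for the survival function gives $J_s(aX)=a\,J_s(X)$ (the reciprocal scaling established inside the proof of Theorem \ref{Theorem:R5scaleinvariance}), and similarly for the CDF, so in those cases the factor in front becomes $1/a$ rather than $a$. Thus the corollary as written is really the PDF version, and I would either present the proof under the tacit assumption that $\phi_1=f$, or state a unified version $S(X,aX)=\bigl(U(\phi_1(X))/U(\phi_1(aX))\bigr)(I(\phi_1(X)|\phi_1(aX)))^{2}$ and specialize to the three cases of interest using the scaling relations already derived in Theorem \ref{Theorem:R5scaleinvariance}.
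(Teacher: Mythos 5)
Your argument is correct and is essentially the paper's own: the paper compresses the same steps into a single display, $S(X,aX)=\frac{a\left(\int_0^\infty \phi_1(ax)\phi_1(x)\,dx\right)^2}{\left(\int_0^\infty \phi_1^2(x)\,dx\right)^2}=\frac{aU(\phi_1(aX),\phi_1(X))^2}{U(\phi_1(X))^2}$, which is exactly your reduction $S/I^2=U(\phi_1(X))/U(\phi_1(aX))$ combined with the scaling relation from Theorem~\ref{Theorem:R5scaleinvariance}. Your closing caveat is also apt: the stated factor $a$ corresponds to the density-based scaling $J(aX)=J(X)/a$, whereas the survival- and distribution-function versions scale as $U(\phi_1(aX))=a\,U(\phi_1(X))$ and would yield $1/a$ instead, a distinction the paper's one-line proof leaves implicit.
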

\begin{proof}
The proof is directly obtained from the given equation:
    \[
     S(X,aX)=\frac{a\left(\int_0^\infty \phi_1(ax)\phi_1(x)dx\right)^2}{\left(\int_0^\infty \phi_1^2(x)dx\right)^2}=\frac{aU(\phi_1(aX),\phi_1(X))^2}{U(\phi_1(X))^2}.
    \]
\end{proof}
The following theorem gives bounds for the extropy divergence ratio and the survival extropy divergence ratio under the proportional hazards model (PHM).

\begin{theorem}\label{Theorem:R5_SE_prophazardmodel}
 Let \( X \) and \( Y \) be two absolutely continuous nonnegative random variables with PDFs \( f(x) \) and \( g(x) \), respectively, satisfying the PHM $h_Y(x) = c h_X(x),$ for some constant $c > 0$ (\cite{cox1972regression}) and $h_X(x) = \frac{f_(x)}{\bar{F} (x)}$ and $h_Y(x) = \frac{g(x)}{\bar{G}(x)}$ respectively denote the hazard rates of $X$ and $Y$.  For \( c >(<) 1\), we have the following relationships.
   \begin{itemize}
     \item [a).]  
        \begin{enumerate}
            \item[(i)] $I_{E}(X|Y) <(>) c^2 I_{E}(Y|X)$.
            \item [(ii)] $J(Y) >(<) c^2 J(X)$.
            \item[(iii)] $I_{E}(X|Y)<(>)c$ and  $I_{E}(Y|X)>(<)\frac{1}{c}$.
        \end{enumerate}  
    \item [b).]  
        \begin{enumerate}
            \item[(i)] $I_{SE}(X|Y) <(>) I_{SE}(Y|X)$.
            \item [(ii)] $J_s(X) <(>) J_s(Y)$.
            \item[(iii)] $I_{SE}(X|Y)<(>)1\ \& \ I_{SE}(Y|X)>(<)1$.
        \end{enumerate} 
        \end{itemize}
 \end{theorem}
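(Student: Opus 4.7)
The plan is to exploit the structural consequence of the PHM, namely $\bar{G}(x) = \bar{F}(x)^{c}$ and therefore $g(x) = c\,f(x)\,\bar{F}^{c-1}(x)$. This substitution reduces every integrand appearing in $J(X), J(Y), \xi J(X,Y), J_s(X), J_s(Y), \xi J_s(X,Y)$ to a power of $\bar{F}(x)$ (weighted by $f^{2}$ in the density case). Since $\bar{F}(x)\in[0,1]$, the whole theorem becomes a pointwise-monotonicity argument on powers of $\bar{F}(x)$, with sign bookkeeping to convert conclusions about integrals back to inequalities on the (negative) extropies.

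For part (a), after substitution I would write
\[
J(X) = -\tfrac{1}{2}\!\int_{0}^{\infty}\! f^{2}\,dx,\quad J(Y) = -\tfrac{c^{2}}{2}\!\int_{0}^{\infty}\! f^{2}\bar{F}^{2(c-1)}\,dx,\quad \xi J(X,Y) = -\tfrac{c}{2}\!\int_{0}^{\infty}\! f^{2}\bar{F}^{c-1}\,dx.
\]
The ratio $I_E(X|Y)/I_E(Y|X)=J(Y)/J(X)$ then equals $c^{2}$ times a ratio of weighted integrals differing only in the exponent of $\bar{F}$. For $c>1$, $\bar{F}^{2(c-1)}\le 1$ pointwise, so the weighted ratio is $\le 1$; this simultaneously yields (i) and (ii), remembering that $J(X)<0$ so dividing by it flips the direction of the inequality. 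For (iii) I would not go through the ratio but bound $I_E(X|Y)$ and $I_E(Y|X)$ separately using $\bar{F}^{c-1}\le 1$ (for $c>1$) for the first bound, and $\bar{F}^{c-1}\ge \bar{F}^{2(c-1)}$ for the second. The case $c<1$ reverses every inequality by the same argument.

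Part (b) follows the same recipe but is cleaner because no factor of $f$ survives:
\[
J_{s}(X) = -\tfrac{1}{2}\!\int \bar{F}^{2}\,dx,\quad J_{s}(Y) = -\tfrac{1}{2}\!\int \bar{F}^{2c}\,dx,\quad \xi J_{s}(X,Y) = -\tfrac{1}{2}\!\int \bar{F}^{c+1}\,dx.
\]
All three claims of (b) reduce to the ordering of exponents: for $c>1$, $2 < c+1 < 2c$ and $\bar{F}(x)\in[0,1]$ give $\int\bar{F}^{2c}\le \int\bar{F}^{c+1}\le \int\bar{F}^{2}$, which immediately delivers (i), (ii), and (iii) after inserting the negative prefactor; $c<1$ again reverses every step.

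The only genuine pitfall is sign bookkeeping. Because $J$ and $J_{s}$ are negative, an inequality on integrals of $\bar{F}^{k}$ flips when rewritten as an inequality on extropies, and the statements of (a)(ii) and (b)(ii) already differ in shape (a factor $c^{2}$ appears in the first but not the second), so one must check the direction on every line rather than guess by symmetry. Beyond that, the proof has no analytic difficulty; the PHM substitution collapses everything to a comparison of powers of $\bar{F}$ on $[0,1]$.
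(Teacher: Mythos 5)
Your proposal is correct and follows essentially the same route as the paper: substitute the PHM relation $\bar{G}=\bar{F}^{c}$ (equivalently $g=c\,f\,\bar{F}^{c-1}$, which is the same as the paper's $g=h_Y\bar{G}$, $f=\tfrac{h_Y}{c}\bar{G}^{1/c}$ written with a different weight), reduce all six quantities to integrals of powers of $\bar{F}$, compare exponents pointwise on $[0,1]$, and flip inequalities where the negative sign of the extropies requires it. The only difference is cosmetic bookkeeping of the weight ($f^{2}$ versus $h_X^{2}\bar{F}^{2}$ or $h_Y^{2}\bar{G}^{2/c}$), so no further comparison is needed.
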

\begin{proof}
a) Assume that $X$ and $Y$ satisfy the PHM.             \begin{equation*}
            h_Y(x)=ch_X(x) \; \text{equivalently} \; \bar{G}(x)=\bar{F}(x)^c.
        \end{equation*}
First, we consider extropy divergence ratio $I_E(X|Y)$. Since $g(x)=h_Y(x)\bar{G}(x)=ch_X(x)\bar{F}^c(x)$ and $f(x)=h_X(x)\bar{F}(x)=\frac{h_Y(x)}{c}\bar{G}^{\frac{1}{c}}(x)$, we have
                \begin{equation}\label{eq:R5ratioextropydivergenceratio}
               \frac{I_E(X|Y)}{I_E(Y|X)}=\frac{J(Y)}{J(X)} = \frac{\int_0^\infty {g^2(x)dx}}{\int_0^\infty {f^2(x)dx}} =\frac{c^2\int_0^\infty {h^2_Y(x)}\bar{G}^{2}(x)dx}{\int_0^\infty h^2_Y(x){\bar{G}^{\frac{2}{c}}(x)dx}}.
                \end{equation}
                and
                \begin{equation}\label{eq:R5extropydivergenceratio_pro}
                    I_E(X|Y)=\frac{\int_0^\infty f(x)g(x)dx}{\int_0^\infty f^2(x)dx}=\frac{c\int_0^\infty h^2_X(x)\bar{F}^{c+1}(x)dx}{\int_0^\infty h^2_X(x)\bar{F}^{2}(x)dx}.
                \end{equation}
                
                \begin{itemize}
                
                    \item[(i)] Let $c>1$.  Using \eqref{eq:R5ratioextropydivergenceratio},
\begin{equation*}
\frac{1}{c}<1\implies\frac{2}{c}<2 \implies \bar{G}(x)^{\frac{2}{c}} > \bar{G}^2(x) \implies \frac{\int_0^\infty {h^2_Y(x)}\bar{G}^{2}(x)dx}{\int_0^\infty h^2_Y(x){\bar{G}^{\frac{2}{c}}(x)dx}}<1.
\end{equation*}
     Applying in  \eqref{eq:R5ratioextropydivergenceratio}, 
     \[
     \frac{I_E(X|Y)}{I_E(Y|X)}<c^2.
     \]
     Similarly, for $c<1$, $ \frac{I_E(X|Y)}{I_E(Y|X)}>c^2.$
    \item[(ii)] From \eqref{eq:R5ratioextropydivergenceratio}, for $c>(<)1$, $J(Y)>(<)c^2J(X)$.

     \item[(iii)] For $c>1$, 
     \begin{equation*}
         c+1<2\implies \bar{F}^{c+1}(x)< \bar{F}^2(x) \implies \frac{\int_0^\infty h^2_X(x)\bar{F}^{c+1}(x)dx}{\int_0^\infty h^2_X(x)\bar{F}^{2}(x)dx}<1. .
     \end{equation*}
     
     Applying in \eqref{eq:R5extropydivergenceratio_pro}, we get $I_E(X|Y)<c$. Similarly, for $c<1$, it becomes $I_E(X|Y)>c$. The steps are the same to prove the case of $I_E(Y|X)$.
     \end{itemize} 

b) Next, we consider the survival extropy divergence ratio $I_{SE}$ under the PHM.  We have 
\begin{equation}\label{eq:R5ratiosurvexdivratio}
            \frac{I_{SE}(X|Y)}{I_{SE}(Y|X)}=\frac{\int_0^\infty (\bar{F}(x))^{2c}dx}{\int_0^\infty (\bar{F}(x))^{2}dx}
        \end{equation}
 and       \begin{equation}\label{eq:R5survexdiveratioprop}
            I_{SE}(X|Y)=\frac{\int_0^\infty \bar{F}(x)\bar{G}(x)dx}{\int_0^\infty \bar{F}^2(x)dx}=\frac{\int_0^\infty (\bar{F}(x))^{c+1}dx}{\int_0^\infty (\bar{F}(x))^{2}dx}.
        \end{equation}
        \begin{itemize}
            \item[(i)] Let $c>1$. $$c>1\implies2c>2\implies \bar{F}^{2c}(x)<\bar{F}^2(x).$$ Applying in  \eqref{eq:R5ratiosurvexdivratio}, we get $\frac{I_{SE}(X|Y)}{I_{SE}(Y|X)}<1$, which proves $I_{SE}(X|Y)<I_{SE}(Y|X)$. Now for $c<1$, the relationship becomes $I_{SE}(X|Y)>I_{SE}(Y|X)$.
            \item[(ii)] For $c>1$, we obtain $\frac{I_{SE}(X|Y)}{I_{SE}(Y|X)}=\frac{J_s(Y)}{J_s(X)}<1$, implies $J_s(Y) >J_s(X)$ since $J_s(.)<0$. And $J_s(Y) <J_s(X)$ for $c<1$.
            \item[(iii)] Using \eqref{eq:R5survexdiveratioprop} and for $c>1$, $c+1>2\implies \bar{F}^{c+1}(x)<\bar{F}^2(x) \implies I_{SE}(X|Y)<1$. Similarly for $c<1$, we derive $I_{SE}(X|Y)>1$. Also, $I_{SE}(Y|X)>1$ for $c>1$ and $I_{SE}(Y|X)<1$ for $c<1$.
        \end{itemize}
\end{proof}
From Theorem \ref{Theorem:R5_SE_prophazardmodel}, the corresponding similarity ratios attain their maximum value 1 if and only if $X \stackrel{d}{=} Y$. Moreover, both $ S_E(X, Y) $ and $S_{SE}(X, Y)$ decrease as \( |c - 1| \) increases. 

    The following theorem gives bounds for the CEDR and cumulative extropy ordering under the proportional reversed hazards model (PRHM).
    \begin{theorem}\label{Theorem:R5proprevhazard}
Let $X$ and $Y$ be two absolutely continuous nonnegative random variables that satisfy PRHM (\cite{gupta1998modeling}), $\lambda_Y(x)=c\lambda_X(x),\; c>0$.   Then for \( c >(<) 1 \),  the following relationships hold,
        \begin{enumerate}
            \item[(i)] $I_{CE}(X|Y) <(>) I_{CE}(Y|X)$.
            \item[(ii)] $\bar{\xi}J(X)<(>) \bar{\xi}J(Y)$.
            \item[(iii)] $I_{CE}(X|Y)<(>)1 \ \& \ I_{CE}(Y|X)>(<)1$.
        \end{enumerate}  
    \end{theorem}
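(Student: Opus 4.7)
The plan is to mirror the argument for the survival extropy case in Theorem \ref{Theorem:R5_SE_prophazardmodel}(b), replacing $\bar{F}$ by $F$ and exploiting the PRHM identity. First I would translate the PRHM assumption $\lambda_Y(x) = c\,\lambda_X(x)$ into the equivalent functional relation $G(x) = F(x)^c$ by integrating the reversed-hazard differential equation and using $F(\infty) = G(\infty) = 1$. Under this reparametrization, each of the three ingredients reduces to $-\tfrac12$ times an integral of a power of $F$: $\bar{\xi}J(X)$ involves $F^{2}$, $\bar{\xi}J(Y)$ involves $F^{2c}$, and $\bar{\xi}J(X,Y)$ involves $F^{c+1}$. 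From here everything reduces to pointwise comparison of powers on $[0,1]$, the range of $F(x)$.

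For part (ii), I would form the ratio
\[
\frac{I_{CE}(X|Y)}{I_{CE}(Y|X)} \;=\; \frac{\bar{\xi}J(Y)}{\bar{\xi}J(X)} \;=\; \frac{\int_0^\infty F^{2c}(x)\,dx}{\int_0^\infty F^{2}(x)\,dx},
\]
and note that for $c > 1$ one has $2c > 2$ and hence $F^{2c}(x) \leq F^{2}(x)$ pointwise, so the ratio is at most $1$. Because $\bar{\xi}J < 0$, the integral inequality flips sign when passing to $\bar{\xi}J(X)$ versus $\bar{\xi}J(Y)$, yielding $\bar{\xi}J(X) < \bar{\xi}J(Y)$, which is (ii); the same inequality on the ratio itself (still of two negative quantities) then delivers $I_{CE}(X|Y) < I_{CE}(Y|X)$, which is (i). The $c < 1$ case is symmetric with all inequalities reversed.

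For part (iii), I would expand
\[
I_{CE}(X|Y) \;=\; \frac{\bar{\xi}J(X,Y)}{\bar{\xi}J(X)} \;=\; \frac{\int_0^\infty F^{c+1}(x)\,dx}{\int_0^\infty F^{2}(x)\,dx},
\]
and compare the exponents: $c > 1$ gives $c+1 > 2$, hence $F^{c+1} \leq F^{2}$ pointwise, and so $I_{CE}(X|Y) \leq 1$. The companion bound for $I_{CE}(Y|X) = \int F^{c+1}\,dx \big/ \int F^{2c}\,dx$ is obtained by comparing $c+1$ with $2c$ (whose ordering is governed by the same condition $c \gtrless 1$), giving $I_{CE}(Y|X) \geq 1$ when $c > 1$, and the reverse when $c < 1$.

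The only subtlety, and the place where a careless calculation would go wrong, is sign bookkeeping: each $\bar{\xi}J$ is strictly negative, so an inequality between the underlying positive integrals of $F^{p}$ must be translated into the \emph{opposite} inequality for the cumulative extropies themselves, and again handled carefully when dividing two negative quantities. Apart from this, the argument uses nothing beyond the monotonicity of $t \mapsto F(x)^{t}$ on $[0,1]$ and is a direct analogue of the PHM proof, so no additional analytic tools are needed.
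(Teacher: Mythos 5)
Your proposal is correct and follows essentially the same route as the paper: translate the PRHM into $G(x)=F(x)^{c}$, write $\bar{\xi}J(X)$, $\bar{\xi}J(Y)$ and $\bar{\xi}J(X,Y)$ as $-\tfrac12$ times integrals of $F^{2}$, $F^{2c}$ and $F^{c+1}$ respectively, and reduce everything to pointwise comparison of powers of $F\in[0,1]$, with the sign flips tracked because each cumulative extropy is negative. In fact your execution is the correct one: the paper re-expresses the same quantities in terms of $G$ but with wrong exponents (it equates $\int_0^\infty F^{2}(x)\,dx$ with $\int_0^\infty G^{2c}(x)\,dx$, whereas $F=G^{1/c}$ gives $\int_0^\infty G^{2/c}(x)\,dx$), and its proof text then concludes $I_{CE}(X|Y)>I_{CE}(Y|X)$ and $I_{CE}(X|Y)>1$ for $c>1$, which contradicts the theorem as stated; your version, comparing $c+1$ with $2$ for $I_{CE}(X|Y)$ and $c+1$ with $2c$ for $I_{CE}(Y|X)$, delivers exactly the stated inequalities (a quick check with $F(x)=x$ on $[0,1]$ and $c=2$ gives $I_{CE}(X|Y)=3/4$ and $I_{CE}(Y|X)=5/4$, confirming the statement and your directions). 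Two cosmetic points: strictness of the inequalities should be noted as coming from $0<F(x)<1$ on a set of positive measure for a non-degenerate continuous $X$, and, as in the paper, finiteness of $\int_0^\infty F^{2}(x)\,dx$ implicitly requires the support to be bounded.
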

    \begin{proof}
        We have $G(x)=(F(x))^c$ under the PRHM.
        \begin{equation}\label{eq:R5ratiocumdivratio}
            \frac{I_{CE}(X|Y)}{I_{CE}(Y|X)}=\frac{\int_0^\infty (G(x))^2dx}{\int_0^\infty (F(x))^2dx}=\frac{\int_0^\infty (G(x))^2dx}{\int_0^\infty (G(x))^{2c}dx}
        \end{equation}
        and
        \begin{equation}\label{eq:R5cumdivratioproprev}
            I_{CE}(X|Y)=\frac{\int_0^\infty F(x)G(x)dx}{\int_0^\infty (F(x))^2dx}=\frac{\int_0^\infty (G(x))^{c+1}dx}{\int_0^\infty (G(x))^{2c}dx}
        \end{equation}
        Since $0\leq G(x)\leq 1$, for $c>1$, we get \eqref{eq:R5ratiocumdivratio} becomes greater than 1, which implies $I_{CE}(X|Y)>I_{CE}(Y|X)$, which further implies $\hat{\xi}J(X)\leq \bar{\xi}J(Y)$ From \eqref{eq:R5cumdivratioproprev}, for $c>1$, we obtain $I_{CE}(X|Y)>1.$ Similarly, the inequality is reversed for $0<c<1.$ This completes the proof.
    \end{proof}
    From Theorem \ref{Theorem:R5proprevhazard}, under PRHM, the cumulative extropy similarity measure satisfies
\[
S_{CE}(X, Y) = \frac{\left[\int_0^\infty (G(x))^{c+1} dx\right]^2}{\left[\int_0^\infty (G(x))^{2c} dx\right] \cdot \left[\int_0^\infty (G(x))^2 dx\right]} \leq 1,
\]
with equality if and only if \( c = 1 \), $i.e.$, $X \stackrel{d}{=} Y$. Furthermore, $S_{CE}(X,Y)$ decreases as $|c - 1|$ increases.

    \begin{theorem} \label{theorem:R5charpropuncertmodel}
         Two absolutely continuous nonnegative random variables $X$ and $Y$ satisfy the proportional generalized extropy model, 
        \[
        U(\phi_2(Y))=c \times U(\phi_1(X)), \; c>0,
        \]
      if and only if $X$ and $Y$ satisfy the proportional GEDR model:
        \[
        I(\phi_2(Y)|\phi_1(X))=\frac{1}{c}I(\phi_1(X)|\phi_2(Y)).
        \]
    \end{theorem}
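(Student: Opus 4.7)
The plan is to reduce the theorem to the identity that already appeared in property 7 of the GEDR axioms, namely
\[
\frac{I(\phi_1(X)|\phi_2(Y))}{I(\phi_2(Y)|\phi_1(X))} = \frac{U(\phi_2(Y))}{U(\phi_1(X))}.
\]
This identity is an immediate consequence of the definitions in \eqref{eq:R5divergenceratio} and \eqref{2.2}: the numerator $U(\phi_1(X),\phi_2(Y))$ cancels when one divides the two ratios. Once this identity is in hand, the theorem becomes a one-line algebraic equivalence.

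For the forward direction, I would assume $U(\phi_2(Y)) = c\, U(\phi_1(X))$ with $c>0$, substitute into the right-hand side of the boxed identity to obtain $I(\phi_1(X)|\phi_2(Y))/I(\phi_2(Y)|\phi_1(X)) = c$, and rearrange to $I(\phi_2(Y)|\phi_1(X)) = \tfrac{1}{c}\, I(\phi_1(X)|\phi_2(Y))$. For the converse, I would assume the proportional GEDR relation and read the identity backward: the left-hand side equals $c$, so $U(\phi_2(Y)) = c\,U(\phi_1(X))$. Positivity of $c$ is consistent because $U(\phi_1(X))$ and $U(\phi_2(Y))$ have the same sign (both are nonpositive by definition of generalized extropy), so their ratio is nonnegative, and it is strictly positive whenever the measures are nondegenerate.

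Since the whole argument is driven by a single cancellation, there is essentially no analytic obstacle; the only care needed is to note that the common factor $U(\phi_1(X),\phi_2(Y))$ that gets cancelled should be nonzero, and that both $U(\phi_1(X))$ and $U(\phi_2(Y))$ are nonzero so that the ratios $I(\phi_1(X)|\phi_2(Y))$ and $I(\phi_2(Y)|\phi_1(X))$ are well defined. These are precisely the finiteness/nondegeneracy hypotheses already built into the boundedness axiom of GEDR, so the proof can proceed without additional assumptions.
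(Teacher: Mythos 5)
Your proposal is correct: the equivalence follows exactly from the identity $I(\phi_1(X)|\phi_2(Y))/I(\phi_2(Y)|\phi_1(X))=U(\phi_2(Y))/U(\phi_1(X))$, which is the same relation the paper records as property 7 of the GEDR, and the paper in fact states this theorem without any written proof, evidently regarding it as immediate from that identity. Your version, including the remark that $U(\phi_1(X),\phi_2(Y))$ must be nonzero for the converse cancellation and that both generalized extropies are nonzero and of the same sign, supplies precisely the argument the paper leaves implicit.
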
 
    \begin{corollary}
      Under Theorem \ref{theorem:R5charpropuncertmodel}, 
      \begin{enumerate}
          \item [(i)] $S(\phi_1(X),\phi_2(Y))=\frac{1}{c}(I(\phi_1(X)|\phi_2(Y)))^2.$ 
          \item [(ii)] $I(\phi_1(X)|\phi_2(Y))<\sqrt{c}.$
          \item [(iii)] $I(\phi_2(Y)|\phi_1(X))<\frac{1}{\sqrt{c}}.$
      \end{enumerate}       
    \end{corollary}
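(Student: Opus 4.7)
My plan is to derive all three parts directly from results already established in the excerpt, without needing any new integral computation. The corollary sits under the hypothesis of Theorem \ref{theorem:R5charpropuncertmodel}, which provides the equivalence
\[
U(\phi_2(Y)) = c\, U(\phi_1(X)) \iff I(\phi_2(Y)|\phi_1(X)) = \tfrac{1}{c}\, I(\phi_1(X)|\phi_2(Y)),
\]
and I will treat both sides of this equivalence as available throughout.

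For part (i), I would invoke Axiom \ref{Axiom:R5similarity} of the GESR, which states
\[
S(\phi_1(X),\phi_2(Y)) = I(\phi_1(X)|\phi_2(Y))\cdot I(\phi_2(Y)|\phi_1(X)).
\]
Substituting the proportional-GEDR identity from Theorem \ref{theorem:R5charpropuncertmodel} into the second factor immediately yields
\[
S(\phi_1(X),\phi_2(Y)) = I(\phi_1(X)|\phi_2(Y))\cdot \tfrac{1}{c}\, I(\phi_1(X)|\phi_2(Y)) = \tfrac{1}{c}\bigl(I(\phi_1(X)|\phi_2(Y))\bigr)^2,
\]
which is exactly (i).

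For parts (ii) and (iii), the cleanest route is to reuse the Cauchy--Schwarz upper bound established in Property 6 of the GEDR, namely $I(\phi_1(X)|\phi_2(Y)) \leq \sqrt{U(\phi_2(Y))/U(\phi_1(X))}$ and symmetrically for the reverse ordering. Substituting the proportional-extropy hypothesis $U(\phi_2(Y)) = c\, U(\phi_1(X))$ collapses the ratio under the square root to $c$ and $1/c$ respectively, giving $I(\phi_1(X)|\phi_2(Y)) \leq \sqrt{c}$ and $I(\phi_2(Y)|\phi_1(X)) \leq 1/\sqrt{c}$. Alternatively, one can derive (ii) and (iii) from (i) by combining it with Axiom \ref{Axiom:R5similaritybound} (so $S \leq 1$), which gives $(I(\phi_1(X)|\phi_2(Y)))^2 \leq c$ directly, and then chaining with the Theorem's identity for the reverse GEDR.

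There is no genuine obstacle here; the corollary is a short consequence of previously proved axioms. The only subtle point is the strict inequality $<$ in (ii) and (iii) as stated: Cauchy--Schwarz yields $\leq$, with equality exactly when $\phi_1$ and $\phi_2$ are proportional, i.e.\ in the degenerate identity case $X \stackrel{d}{=} Y$ (which forces $c=1$). I would therefore read the strict sign as excluding this trivial case, and mention briefly that equality characterizes $X \stackrel{d}{=} Y$, paralleling the discussion following Theorem \ref{Theorem:R5_SE_prophazardmodel} and Theorem \ref{Theorem:R5proprevhazard}.
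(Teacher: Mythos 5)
Your derivation is correct and is essentially the argument the paper intends (the corollary is stated there without a written proof): part (i) is the product axiom $S(\phi_1(X),\phi_2(Y))=I(\phi_1(X)|\phi_2(Y))\,I(\phi_2(Y)|\phi_1(X))$ combined with the theorem's identity, and (ii)--(iii) follow either from the Cauchy--Schwarz bound $I(\phi_1(X)|\phi_2(Y))\le\sqrt{U(\phi_2(Y))/U(\phi_1(X))}=\sqrt{c}$ or, equivalently, from (i) together with $S\le 1$. Your caveat about the strict sign is also apt: Cauchy--Schwarz gives only $\le$, with equality exactly when $\phi_1=\phi_2$ (hence $X \stackrel{d}{=} Y$ and $c=1$), so the strict inequalities in (ii)--(iii) should be read as excluding that degenerate case, as you note.
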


In the following section, we derive some non-parametric estimators for all similarity ratios discussed so far and conduct some simulation studies to validate their performance.
\section{Estimation and Simulation Studies}

Let $X_1, X_2, \dots, X_n \sim f$ and $Y_1, Y_2, \dots, Y_n \sim g$ be independent samples from two continuous distributions with densities $f$ and $g$, respectively. To estimate $S_E(X, Y)$ from data, we use the kernel density estimators by \cite{parzen1962estimation}:
\[
\hat{f}_n(x)=\frac{1}{nb_n}\sum_{j=1}^{n}k\left(\frac{x-X_j}{b_n}\right), \quad
\hat{g}_n(x)=\frac{1}{nb_n}\sum_{j=1}^{n}k\left(\frac{x-Y_j}{b_n}\right),
\]
where $k(\cdot)$ is a kernel of order $s$ with compact support and ${b_n}$, the bandwidths, is a sequence of positive numbers such that
$b_n\rightarrow 0$ and $nb_n\rightarrow\infty$ as $n\rightarrow \infty$.

Then, the non-parametric kernel-based estimator of the extropy similarity $S_E$ is given by
\[
\hat{S}_E = \frac{\left( \int_0^\infty \hat{f}_h(x)\hat{g}_h(x)dx \right)^2}{\left( \int_0^\infty \hat{f}_h^2(x)dx \right) \left( \int_0^\infty \hat{g}_h^2(x)dx \right)}.
\]

To empirically estimate the survival and cumulative extropy-based similarity and divergence measures, we approximate the integral definitions by summations using the empirical distribution and survival functions.

Let\ $\hat{\bar{F}}_n(x) = \frac{1}{n} \sum_{i=1}^n \mathbf{1}\{X_i > x\}$ and $\hat{\bar{G}}_n(x) = \frac{1}{m} \sum_{j=1}^m \mathbf{1}\{Y_j > x\}$ be the empirical survival functions. We considered a pooled sample ($Z$) of values of $X$ and $Y$ to estimate the inaccuracy and the combined sample size $N=m+n$. Then,   
    \[
    \widehat{\xi J_s}(X,Y) = -\frac{1}{2} \sum_{k=1}^{N} \hat{\bar{F}}(z_k) \hat{\bar{G}}(z_k) \Delta z_k,
    \]   
    \[
    \widehat{J_s}(X) = -\frac{1}{2} \sum_{k=1}^{n} \hat{\bar{F}}^2(x_k) \Delta x_k,
    \] 
    \[
     \widehat{J_s}(Y) = -\frac{1}{2} \sum_{k=1}^{m} \hat{\bar{G}}^2(y_k) \Delta y_k.
    \]
   The empirical survival extropy similarity ratio is given by :
    \[
    \widehat{S}_{SE}(X,Y) = \frac{\left(\widehat{\xi J_s}(X,Y)\right)^2}{\widehat{J_s}(X)\widehat{J_s}(Y)}
    \]

Let $\hat{F}(x) = \frac{1}{n} \sum_{i=1}^n \mathbf{1}\{X_i \leq x\}$ and $\hat{G}(x) = \frac{1}{m} \sum_{j=1}^m \mathbf{1}\{Y_j \leq x\}$ be the empirical cumulative distribution functions. Then the cumulative extropy inaccuracy estimator is 
    \[
    \widehat{\bar{\xi}J}(X,Y) = -\frac{1}{2} \sum_{k=1}^{N} \hat{F}(z_k) \hat{G}(z_k) \Delta z_k.
    \]
We have:
    \[
    \widehat{\bar{\xi}J}(X) = -\frac{1}{2} \sum_{k=1}^{n} \hat{F}^2(x_k) \Delta x_k,
    \]
     \[
    \widehat{\bar{\xi}J}(Y) = -\frac{1}{2} \sum_{k=1}^{m} \hat{G}^2(y_k) \Delta y_k.
    \]
    
   Then the empirical cumulative extropy similarity ratio is given by
    \[
    \widehat{S}_{CE}(X,Y) = \frac{\left(\widehat{\bar{\xi}J}(X,Y)\right)^2}{\widehat{\bar{\xi}J}(X)\widehat{\bar{\xi}J}(Y)}.
    \]
    
Tables \ref{Table:R5extropysimesti}, \ref{Table:R5survextropysimesti}, and \ref{Table:R5cumextropysimesti} present the bias and mean squared error (MSE) of the estimated extropy similarity $\hat{S}_E(X,Y)$, survival extropy similarity $\hat{S}_{SE}(X,Y)$, and cumulative extropy similarity $\hat{S}_{CE}(X,Y)$, respectively, for increasing sample sizes. The estimates were computed for pairs of distributions with known theoretical similarity values: two beta distributions ($\beta(3,2)$ and $\beta(2,3)$) for $S_E(X,Y)$, exponential distributions with rates $\lambda_X=1$ and $\lambda_Y=2$ for $S_{SE}(X,Y)$, and a uniform distribution $U(0,1)$ with a beta distribution ($\beta(3,2)$) for $S_{CE}(X,Y)$. As observed, all three similarity estimators exhibit decreasing bias and MSE as the sample size increases, confirming their consistency.
\begin{table}[]
    \centering
    \caption{Bias and MSE of $S_E(X,Y)$ of two beta distributions ( $\beta(3,2)$ and $\beta(2,3)$ ) with actual value 0.5625}
    \begin{tabular}{p{2cm}|p{2cm}|p{2cm}|p{3cm}}
    \hline
       $n$  & $\hat{S}_E(X,Y)$ & Bias & MSE  \\
       \hline
        50 & 0.6073 & 0.0448 & 0.00201\\
        75 & 0.5396 & 0.0229 & 0.00053 \\
        100 & 0.5423 & 0.0202 & 0.00041\\
        200 & 0.5576 & 0.0049& $2.4\times 10^{-5}$\\
        
        \hline
    \end{tabular}
    
    \label{Table:R5extropysimesti}
\end{table}
\begin{table}[]
    \centering
    \caption{Bias and MSE of $S_{SE}(X,Y)$ of two exponential distributions ($\lambda_X=1$, $\lambda_Y=2$) with actual value 0.8889}
    \begin{tabular}{p{2cm}|p{2cm}|p{2cm}|p{3cm}}
    \hline
       $n$  & $\hat{S}_{SE}(X,Y)$ & Bias & MSE  \\
       \hline
        50 & 0.8764 & 0.0125 & 0.000156\\
        75 & 0.8778 & 0.0111 & 0.000123 \\
        100 & 0.8802 & 0.0087 & $7.56\times 10^{-5}$\\
        200 & 0.8887 & 0.0002 & $2\times 10^{-7}$\\
        
        \hline
    \end{tabular}
    
    \label{Table:R5survextropysimesti}
\end{table}
\begin{table}[h]
    \centering
     \caption{Bias and MSE of $S_{CE}(X,Y)$ of uniform $U(0,1)$ and beta ($\beta(3,2)$) distributions with actual value 0.94501}
    \begin{tabular}{p{2cm}|p{2cm}|p{2cm}|p{3cm}}
    \hline
       $n$  & $\hat{S}_{CE}(X,Y)$ & Bias & MSE  \\
       \hline
        50 & 0.9715 & 0.02649 & 0.0007\\
         75 & 0.9702 & 0.02518 & 0.0006\\
          100 & 0.9667 & 0.02167 & 0.0005 \\
        200 & 0.9628 & 0.0178 & 0.0004\\
        \hline
    \end{tabular}
   
    \label{Table:R5cumextropysimesti}
\end{table}


\begin{table}[h!]
\begin{center}
\caption{Scale and location invariance of similarity measures using estimates. (Note that $X$ and $Y$ follows exponential with $\lambda_X=1$ and $\lambda_Y=2$ respectively.) }
\label{Table:R5invariancesimilarity} 
\begin{tabular}{p{2cm}|p{2cm}|p{2cm}|p{2cm}|p{2cm}}
 \hline
 A&B & $\hat{S}_{SE}(A,B)$& $\hat{S}_{CE}(A,B)$&$\hat{S}_E(A,B)$\\ 
 \hline
$X$&$Y$& 0.8744364& 0.8857035&0.6657251\\
$2X$&$2Y$&0.8744364&0.8857035&0.6657251\\
$X/2$&$Y/2$& 0.8744364&0.8857035&0.6657251\\
$X+0.5$& $Y+0.5$& 0.8744364&0.8857035&0.6674945\\
$X-0.5$ & $Y-0.5$ &0.8744364&0.8857035&0.6476732\\
 \hline
\end{tabular}
\end{center}
\end{table}
\subsection{Location and scale invariance of estimators}
Table \ref{Table:R5invariancesimilarity} shows the estimated values of all three defined similarity ratios for two exponential random variables $X$ and $Y$ with parameters $\lambda_1$ and $\lambda_2$ respectively. The estimates are computed for the original pair $(X, Y)$ as well as for scaled ($2X$, $2Y$ and $X/2$, $Y/2$) and shifted ($X \pm 0.5$, $Y \pm 0.5$) versions of the random variables. The estimates of the first three rows of the Table ~\ref{Table:R5invariancesimilarity} show that all three similarity ratios remain unchanged under scale transformations, confirming the scale-invariance property. The subsequent rows shows that the survival and cumulative extropy similarities also remain constant under location shifts. The extropy similarity ratio $\hat{S}_E(A,B)$ exhibits minor variation under shifting, remaining close to its original value, which underscores its relative stability for location changes. These findings substantiate Theorems \ref{Theorem:R5locationinvariance} and \ref{Theorem:R5scaleinvariance} in which the extropy-based similarity ratios are invariant in scale or location changes. They can be used for comparing distributions that may differ in scale and location parameters.

 \section{Data analysis}
\subsection{Lifetime data analysis using similarity ratios}
We consider lifetime data to explore similarity ratios. The data on the time to death of mice receiving various doses of red dye No. 40 by \cite{lagakos1981case} consist of the time to death data of 400 mice subjected to a lifetime feeding experiment. The objective of the experiment was to investigate the oncogenicity (cancer-causing potential) of a food colour labelled as FD $\&$ C Red No. 40. Fifty mice from each gender (female and male) were allotted to each of four groups: a control group and three dose-level groups (low, medium and high) of the test substance. The variable of interest is the observed lifetime of mice exposed to various dose-level groups. Using similarity ratios, we study the similarity of the lifetime pattern of mice under different treatment groups. Table \ref{Table:R5micedatasimilarity} gives the estimated values of all three similarity measures for different combinations of dose levels. All similarity estimates of time to death between different treatment groups show a consistent trend. All three similarity measures give values ranging from 0 to 1, where higher values indicate greater similarity between the distributional characteristics of two groups. Each group shows perfect similarity with itself as expected. However, when comparing the Control group with the increasing dose groups (Low, Medium and High), a progressive decline in similarity values is observed across all estimators, indicating that the treatment dose affects the underlying distribution of the data. For example, consider $\hat{S}_{SE}(X,Y)$. It decreases from 0.96298 (Control vs Low) to 0.94205 (Control vs Medium) and further to 0.90603 (Control vs High). Similar trends are observed with the other two measures. The high-dose group consistently shows the lowest similarity to the Control group, revealing a significant departure from the distribution. Comparisons among adjacent treatment groups (e.g., Low vs Medium, Medium vs High) show moderate similarity values, confirming a gradual, dose-dependent shift in distribution. On the whole, the similarity estimates quantitatively confirm that increasing doses of red dye No. 40 result in consistent changes in the data distribution. Figure \ref{fig:R5extropysimilarityheatmap_micedata} is a heatmap of extropy similarity, visualizing the trend of similarity of distributions of time to death between different dose levels.

\begin{table}[h!]
\centering
\caption{Similarity estimates between different treatment groups of mice data}
\label{Table:R5micedatasimilarity} 
\scalebox{0.9}{
\begin{tabular}{c|c|c|c|c|c}
\hline
\textbf{} & \textbf{Similarity} & \textbf{Control} & \textbf{Low} & \textbf{Medium} & \textbf{High} \\
\hline
\multirow{3}{*}{\textbf{Control}} 
& $\hat{S}_{SE}(X,Y)$ & 1 & 0.96298 & 0.94205 &0.90603  \\
& $\hat{S}_{CE}(X,Y)$ & 1 & 0.96938 & 0.95518 &0.87185  \\
& $\hat{S}_{E}(X,Y)$  & 1 & 0.97288 & 0.95918 &0.89940  \\
\hline
\multirow{3}{*}{\textbf{Low}} 
& $\hat{S}_{SE}(X,Y)$ &  & 1 & 0.94501 &0.93401  \\
& $\hat{S}_{CE}(X,Y)$ &  & 1&  0.95583&0.94797  \\
& $\hat{S}_{E}(X,Y)$  &  & 1&  0.91331&0.89625  \\
\hline
\multirow{3}{*}{\textbf{Medium}} 
& $\hat{S}_{SE}(X,Y)$ &  &  &  1& 0.90647 \\
& $\hat{S}_{CE}(X,Y)$ &  &  &  1& 0.93594 \\
& $\hat{S}_{E}(X,Y)$  &  &  &  1& 0.86421 \\
\hline
\multirow{3}{*}{\textbf{High}}
& $\hat{S}_{SE}(X,Y)$ &  &  &  &1  \\
& $\hat{S}_{CE}(X,Y)$ &  &  &  &1  \\
& $\hat{S}_{E}(X,Y)$  &  &  &  & 1 \\
\hline
\end{tabular}}
\end{table}
\begin{figure}
    \centering
    \includegraphics[width=0.4\linewidth]{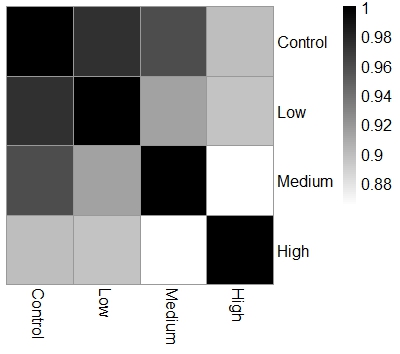}
    \caption{Extropy similarity heat map of treatment groups of mice data}
    \label{fig:R5extropysimilarityheatmap_micedata}
\end{figure}

\subsection{Application of scale invariance property of similarity measures in image analysis}

\begin{figure}[h!]
  \centering

  \begin{subfigure}{0.41\textwidth}
    \includegraphics[width=\linewidth]{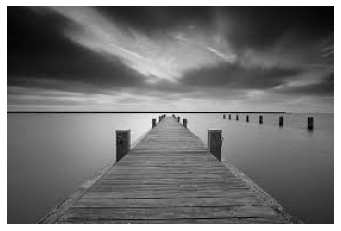}
    \caption{$\bold{A}$}
  \end{subfigure}
 \hspace{0.03\textwidth}
  \begin{subfigure}{0.41\textwidth}
    \includegraphics[width=\linewidth]{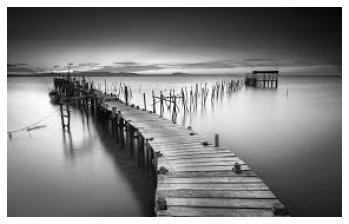}
    \caption{$\bold{B}$}
  \end{subfigure}

  \vspace{0.3cm}

  \begin{subfigure}{0.41\textwidth}
    \includegraphics[width=\linewidth]{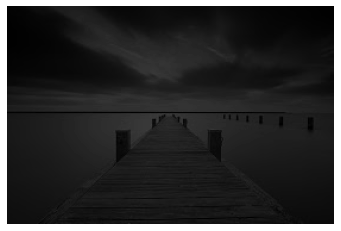}
    \caption{$\bold{A_{0.25}}$}
  \end{subfigure}
  \hspace{0.03\textwidth}
  \begin{subfigure}{0.41\textwidth}
    \includegraphics[width=\linewidth]{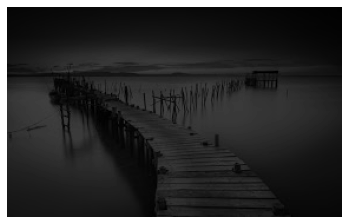}
    \caption{$\bold{B_{0.25}}$}
  \end{subfigure}

  \vspace{0.3cm}

  \begin{subfigure}{0.41\textwidth}
    \includegraphics[width=\linewidth]{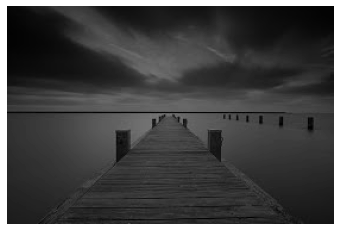}
    \caption{$\bold{A_{0.5}}$}
  \end{subfigure}
\hspace{0.03\textwidth}
  \begin{subfigure}{0.41\textwidth}
    \includegraphics[width=\linewidth]{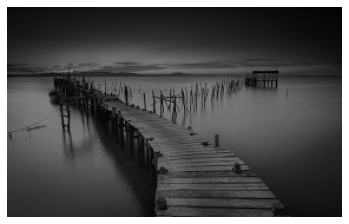}
    \caption{$\bold{B_{0.5}}$}
  \end{subfigure}
 \vspace{0.3cm}

  \begin{subfigure}{0.41\textwidth}
    \includegraphics[width=\linewidth]{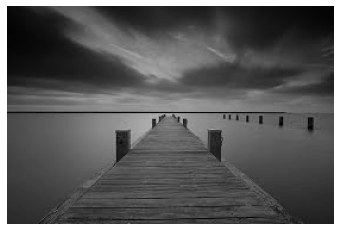}
    \caption{$\bold{A_{0.75}}$}
  \end{subfigure}
\hspace{0.03\textwidth}
  \begin{subfigure}{0.41\textwidth}
    \includegraphics[width=\linewidth]{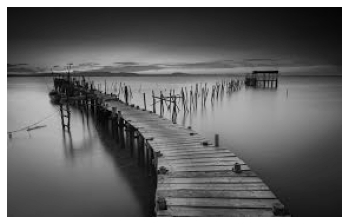}
    \caption{$\bold{B_{0.75}}$}
  \end{subfigure}

  \caption{Pair of images with different exposures}
  \label{fig:R5bridgeimages_exposures}
\end{figure}
Here, we extend the applications of defined similarity measures to analyse image data. Throughout this discussion, we represent the images using bold uppercase letters and the random variable corresponding to those images using normal uppercase letters. Note that the random variable representing an image takes pixel values of images scaled in [0,1], where 0 denotes black and 1 denotes white. 

To examine the effect of exposure variations on image similarity, we make use of the scale-invariance property of similarity ratios.  Initially, we select two images $\bold{A}$ and $\bold{B}$ in Figure \ref{fig:R5bridgeimages_exposures} and computed the three defined similarity measures between them using the grey levels (pixel values) ranges from 0 to 1. Let A and B denote the random variable that denote the grey levels of images $\bold{A}$ and $\bold{B}$ respectively. To simulate a change in exposure, we uniformly scale the pixel values of both images by a constant, $0<c\leq1$, resulting in exposure-reduced versions of the original image.  Let $\bold{X_c}$ denote the exposure-reduced image by scaling (multiplying) the grey levels of image $\bold{X}$ by $c$. Figure \ref{fig:R5bridgeimages_exposures} give the images $\bold{A}$ and $\bold{B}$ and their respective exposure-reduced counterparts $\bold{A_{0.25}}$, $\bold{A_{0.5}}$, $\bold{A_{0.75}}$, $\bold{B_{0.25}}$, $\bold{B_{0.5}}$, and $\bold{B_{0.75}}$ used for the similarity analysis. The process of exposure reduction darkens the images without altering their relative contrast structure. Now, the similarity ratios are calculated for the corresponding scaled pair of images. This approach allows us to assess whether the similarity ratios remain consistent under uniform exposure changes, thereby illustrating their scale invariance. 

 In Table \ref{Table:R5bridgeimagesimilarity}, we present the estimated survival extropy similarity $\hat{S}_{SE}(X, Y)$, cumulative extropy similarity $\hat{S}_{CE}(X, Y)$, and the extropy similarity $\hat{S}_E(X, Y)$ for the pair of images ($\bold{A_c}$, $\bold{B_c}$) for different exposure scaling values 1, 0.25, 0.5 and 0.75. 
 \begin{table}[h!]
\begin{center}
\caption{Similarity estimates for the same pair of images with scaled exposures}
\label{Table:R5bridgeimagesimilarity} 
\begin{tabular}{p{2.2cm}|p{2.2cm}|p{2cm}|p{2cm}|p{2cm}|p{2cm}}
 \hline
 Image 1($X$)& Image 2($Y$) & $c$ & $\hat{S}_{SE}(X,Y)$&$\hat{S}_{CE}(X,Y)$&$S_{E}(X,Y)$\\ 
 \hline
A&B&1&0.9815471&0.9817769&0.7891895\\
$A_{0.25}$&$B_{0.25}$&0.25&0.9815471&0.9817769&0.7891895\\
$A_{0.5}$&$B_{0.5}$&0.5& 0.9815471&0.9817769&0.7891895\\
$A_{0.75}$&$B_{0.75}$&0.75& 0.9815471&0.9817769&0.7891895\\
 \hline
\end{tabular}
\end{center}
\end{table}

Notably, all three similarity ratios remain unchanged across these varying exposures in Table \ref{Table:R5bridgeimagesimilarity}. This consistent behavior validates the scale-invariance property of the proposed extropy-based similarity ratios as well as their estimators. In particular, the similarity values are robust to linear transformations in image intensity, indicating that these measures are well-suited for comparing images taken under different lighting or exposure conditions. This is a desirable property in practical image analysis tasks, where variations in acquisition settings are common.
\begin{table}[h!]
\begin{center}
\caption{Survival extropy similarity of images with reference image}
\label{Table:R5identifybridge} 
\begin{tabular}{p{2.5cm}|p{2.5cm}|p{3cm}}
 \hline
 Black grid($Z$)& Image ($Y$)  & $\hat{S}_{SE}(Z,Y)$\\ 
 \hline
Z & A&$1.650883\times e^{-6}$\\
Z&$A_{0.25}$&$1.650883\times e^{-6}$\\
Z&$A_{0.5}$&$1.650883\times e^{-6}$\\
Z&$A_{0.75}$&$1.650883\times e^{-6}$\\
\hline
Z&B&$6.713449\times e^{-7}$\\
Z&$B_{0.25}$&$6.713449\times e^{-7}$\\
Z&$B_{0.5}$&$6.713449\times e^{-7}$\\
Z&$B_{0.75}$&$6.713449\times e^{-7}$\\
 \hline
\end{tabular}
\end{center}
\end{table}

To address the challenge of identifying the same images captured under different exposure levels or for the classification of similar images from a mixed set, we utilize a uniformly black image of the same dimensions as a reference image represented by $\bold{Z}$. As the pixel value for black is 0, the reference image results in a degenerate distribution concentrated at 0. Applying  $\hat{\bar{F}}_Z(x)=P(Z\geq x)=1 \ \ \forall x\geq0$, we  compute the survival extropy similarity between $\bold{Z}$ and each image $\bold{Y_c}$. Since $cZ=Z$, we get 
\[
S_{SE}(Z,Y)=S_{SE}(cZ,cY)=S_{SE}(Z,cY), \quad  0<c\leq 1.
\]
For instance, Table \ref{Table:R5identifybridge} presents the similarity values between the reference image and all variations of A and B, in which the values are equal for the same image with different exposures. That is, we get unique values for similarity of $\textbf{A}$ and $\textbf{B}$ with $\bold{Z}$ as
\[
S_{SE}(Z,A)=S_{SE}(cZ,cA)=S_{SE}(Z,cA)=S_A, \quad  0<c\leq 1,
\]
\[
S_{SE}(Z,B)=S_{SE}(cZ,cB)=S_{SE}(Z,cB)=S_B, \quad  0<c\leq 1.
\]
This reveals that the similarity ratio can be used to classify set of images from a mixed group of images of different exposure levels using the similarity ratio. \textbf{It is possible to derive a classification criterion for images using the similarity estimate of just a single image from each group, which can then be used to identify all other images in those groups}. The probability of true classification will always be 1. Similarly, we can extend this scenario into a large group of images as follows:

Consider a mixed group of images, and we classify them into corresponding groups. The group comprises $m$ unique images and their $k_i, i=1,2,..m$ exposure-reduced counterparts. Let $n$ denote the total number of images in the mixed set. The following is the Algorithm for the classification of $n=\sum_{i=1}^m k_i$ images into $m$ groups by identifying only one image from each group.
\begin{itemize}
    \item[\textbf{Step 1}]: Identify one image $\bold{X^{[i]}}$ from each group $i=1,2,3,... m$. Thus, we get $m$ unique images. 
    \item[\textbf{Step 2}]: Find the similarity ratio estimate  between $\bold{X^{[i]}}$ and $\bold{Z}$ and record the values. Let the similarity values corresponding to the $i^{th}$ group be $S_i$.
    \[
    S_{SE}(Z,X^{[i]})=S_i.
    \]
    Check whether $S_i$'s are unique for all $i$. 
    \item[\textbf{Step 3}]: Find the similarity of remaining $n-m$ images $\bold{A_j}, j=1,2,..,n-m$, with the reference image. Let the general notation for the similarity estimates be $S_{SE}(Z, A_j)$.
    \item[\textbf{Step 4}]: Compare all $S_{SE}(Z,A_j)$ values with the recorded values in Step:2. If $S_{SE}(Z,A_j)=S_i$ then classify the image $\bold{A_j}$ into group $i$.
\end{itemize}
It is possible to apply these classification techniques in several practical situations where exposure decreases without varying contrast, especially in fields such as image processing, photography, and machine learning, etc. Some of them are
\begin{itemize}
    \item[(i)] Security surveillance and low-light photography. In dim light, a camera might lower the exposure time or sensor gain to avoid overexposure, resulting in darker images. Example: A fixed camera capturing periodic images under the same scene; if the light dims slightly, the image gets darker, but objects remain distinguishable in terms of their relative brightness.
    \item [(ii)] Intensity normalization across scans in medical imaging (X-ray, MRI, etc.). To standardize images taken with different exposure settings or machines, pixel intensities may be scaled down to a common level. Then the brightness drops uniformly, but contrast (which is clinically more important for features like tumors, tissue boundaries) remains unaffected.
\end{itemize}


\section*{Acknowledgements}
The first author would like to thank Cochin University of Science and Technology, India, for the financial support.

\section*{Conflict of Interest statement}
On behalf of all authors, the corresponding author declares that there is no conflict of interest.

\bibliographystyle{apalike}
\bibliography{reference}
\end{document}